\newtheorem{thm}{Theorem}
\newtheorem{cor}{Corollary}
\newtheorem{lem}{Lemma}
\newtheorem{defn}{Definition}
\newtheorem{rem}{Remark}
\newcommand{\Rmnum}[1]{\expandafter\@slowromancap\romannumeral #1@}
\begin{document}
\title{Spatio-temporal Modeling for Massive and Sporadic Access}
\author{Yi Zhong, \emph{Member, IEEE}, Guoqiang Mao, \emph{Fellow, IEEE}, Xiaohu Ge, \emph{Senior Member, IEEE}, and Fu-chun Zheng, \emph{Senior Member, IEEE}
\thanks{Yi Zhong and Xiaohu Ge are with School of Electronic Information and Communications, Huazhong University of Science and Technology, Wuhan, P. R. China (e-mail: \{yzhong, xhge\}@hust.edu.cn). Guoqiang Mao is with the
School of Telecommunications Engineering, Xidian University (e-mail: g.mao@ieee.org). Fu-chun Zheng is with the School of Electronic and Information Engineering, Harbin Institute of Technology Shenzhen (e-mail: zhengfuchun@hit.edu.cn).

The corresponding author is Xiaohu Ge.

This research was supported by the National Natural Science Foundation of China (NSFC)
grant No. 61701183 and the Fundamental Research Funds for the Central Universities
through grant 2018KFYYXJJ139.
}
}
\maketitle
\begin{abstract}
The vision for smart city imperiously appeals to the implementation of Internet-of-Things (IoT), some features of which, such as massive access and bursty short packet transmissions, require new methods to enable the cellular system to seamlessly support its integration. Rigorous theoretical analysis is indispensable to obtain constructive insight for the networking design of massive access. In this paper, we propose and define the notion of massive and sporadic access (MSA) to quantitatively describe the massive access of IoT devices. We evaluate the temporal correlation of interference and successful transmission events, and verify that such correlation is negligible in the scenario of MSA. In view of this, in order to resolve the difficulty in any precise spatio-temporal analysis where complex interactions persist among the queues, we propose an approximation that all nodes are moving so fast that their locations are independent at different time slots. Furthermore, we compare the original static network and the equivalent network with high mobility to demonstrate the effectiveness of the proposed approximation approach. The proposed approach is promising for providing a convenient and general solution to evaluate and design the IoT network with massive and sporadic access.
\end{abstract}
\begin{IEEEkeywords}
Interference correlation, high mobility, massive access, spatio-temporal traffic
\end{IEEEkeywords}

%

\section{Introduction}
\subsection{Motivations}
Evolution of smart terminals (smart phones, smart watches, and intelligent glasses, etc.) has spawned a rich diversity of new applications, such as Virtual Reality, Augmented Reality, Industry 4.0 and so on, which post new challenges to wireless networks hosting these applications (see Figure \ref{fig:mmtc}).
A major branch of these applications is the access of massive Internet-of-Things (IoT) devices, also considered as one of the three main application scenarios of the fifth generation mobile communications system (5G) \cite{7736615, 7565189}.
The most promising applications of IoT include long-term environmental monitoring with limited energy consumption, smart city scenarios with millions of sensors, low-delay and high-reliability scenarios in wireless factory control etc.

\begin{figure}
\centering
\includegraphics[width=0.45\textwidth]{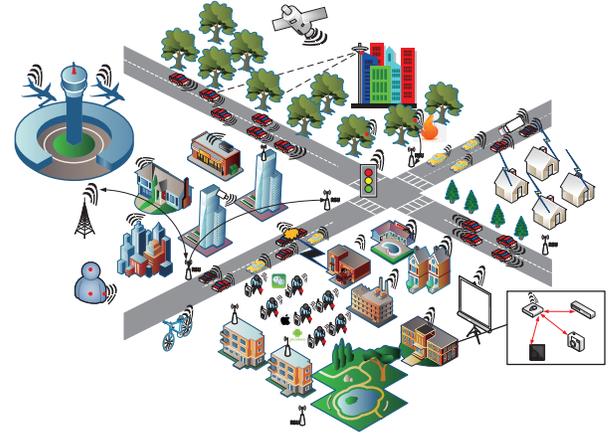}
\caption{The practical scenario for massive access of IoT devices.}
\label{fig:mmtc}
\end{figure}

For a typical IoT application, the number of wireless devices may reach 300,000 in each cell \cite{6736746}. In addition to the huge number of devices, another distinctive feature of the IoT applications is that the access requests from these massive number of devices are sporadic. The characteristic of sporadic access mainly manifests in the following aspects.
\begin{itemize}
\item
The users access the wireless network sporadically. For instance, the data generated by the IoT devices may be periodic and driven by some regularly occurring events. Typical application is the use of intelligent water meters, which report the measured data periodically.

\item
The size of each packet is small, and the data rate is low. The size of packets for a typical IoT application may go down to a few bytes. Meanwhile, the data rate for each user may be around 10 kb/s.
\end{itemize}

In view of this, we propose the notion of \emph{massive and sporadic access} (MSA), the precise mathematical description of which will be given in the following sections.
The massive and sporadic properties for such kind of accesses bring new challenges for the design of an efficient air interface. For the traditional wireless networks, a large overhead within all layers in the protocol stack is required to implement various network functions, such as access control, reliable transmission, authentication, security, and so on. However, for the MSA, the amount of information in each attempt of transmission may be only a few bytes. The large amount of overhead information generated by the protocol stack in the traditional wireless networks may greatly degrade the efficiency of MSA.
Moreover, the interference as well as other networking features of MSA may also be very different from the traditional wireless networks. Therefore, the scenarios of MSA should be modeled and analyzed qualitatively so that the constructive insight for the design of wireless networks with MSA could be obtained.

Though the point process theory has been widely used to evaluate the instantaneous performance metrics for a wireless network such as the coverage probability and the achievable rate \cite{haenggi2012stochastic, 6042301}, it is still not convenient for the characterization of many other significant performance metrics such as the delay when the queueing process is considered. In particular, when it is necessary to characterize the coupling between the traffic with spatio-temporal variation and the performance of a wireless network, rigorous analysis based on the point process theory becomes inapplicable \cite{7886285}.
The main difficulty of introducing the queueing analysis to the performance evaluation of a large wireless network, which is non-negligible for a practical system, lies in the complicated interaction among the queues \cite{rao1988stability, ephremides1987delay, telatar1995combining, 6691293}, i.e., the serving rates of all queues in a wireless network are highly coupled with the statuses of the queues (i.e., empty or not).

However, for the scenario of MSA, it is very unlikely that a node will be continuously active over consecutive time slots. Thus, the set of active nodes that cause interference changes dramatically over the time. For the case of ``extremely sporadic'', the sets of active nodes at different moments may not intersect since the probability that a node is active at two different moments is very small, in which case the interference might be considered as independent at different time slots. For the case of ``extremely massive'', an active node can appear anywhere in the plane. Intuitively, the analysis of the scenario of MSA becomes approximately equivalent to the analysis of a  network whose nodes move so fast that there is no coupling between the queues, thereby greatly reducing the analytical difficulties. Therefore, in this paper, we propose to evaluate the scenario of MSA by the equivalence of high mobility \cite{6380497, net:Stamatiou10cl} where the nodes move so fast that the location of a node in the subsequent time slots can be considered as totally independent from that in the current time slot.

\subsection{Related Works}
In order to characterize the massive and sporadic properties of MSA, we use the
combination of the point process theory and the queueing theory to model both the spatial distribution of the massive number of devices and the sporadic arrival of packets at each device. The stochastic geometry tools, especially the point process theory, have been widely used to model the spatial topology of wireless networks in recent years \cite{haenggi2012stochastic,haenggi2009stochastic}. For example, the Poisson point process (PPP) has been used to analyze the coverage probability, the achievable rate and other performance metrics in cellular networks \cite{baccelli2009stochastic2,6042301, ge2015spatial}.
Related works using the point process theory to evaluate the performance of IoT applications include \cite{78011591}, where two single-hop relaying schemes exclusively designed for the IoT are proposed and analyzed. By characterizing the received signal and interference powers using the point process theory, the authors in \cite{78011591} derived the outage probability and the maximum density of IoT devices that can be supported under an outage constraint. In \cite{8400530}, a framework to evaluate the end-to-end outage probability and the uplink data transmission rate in a single-hop relay network for IoT is proposed and evaluated using stochastic geometry. The authors in \cite{8354945} propose an analytical framework based on stochastic geometry to investigate the system performance in terms of the average success probability and the average number of simultaneously served IoT devices. The authors in \cite{7937902} also present a tractable analytical framework but to investigate the signal-to-interference ratio (SIR), thereby deriving the success probability, the average number of successful IoT devices and the probability of successful channel utilization for the cellular-based IoT network.
The authors in \cite{8554298} and \cite{8125754} evaluate the performance of massive non-orthogonal multiple access (NOMA) system.
The above-mentioned works focus on the performance metrics by considering a snapshot of the network. As for the analysis of the longer-term metrics such as the interference correlation and the delay, which requires the description of temporal variation, these approaches proposed in existing works are unable to provide an effective solution.
Moreover, the analysis based on a snapshot of a network cannot completely capture the sporadic property of MSA.

The works related to interference correlation in static Poisson networks include
\cite{net:Ganti09CL, net:Haenggi13twc, net:Zhong14twc}, where the interference correlation, as well as the ways to reduce its impact, is explored.
The analysis of temporal variation without considering the queueing process includes that for the local delay, which is defined as the number of time slots required for a packet to be successfully transmitted assuming that the networks are backlogged \cite{baccelli2010new, net:Haenggi13tit, net:Gong13twc}. However, since the queueing process is ignored in the evaluation of local delay, the obtained results reflect the practical situation accurately especially for those cases where random arrival of traffic has a great impact on the network performance.

Related works considering both the spatial distribution of nodes and the temporal variation of traffic could be found in \cite{blaszczyszyn2015performance, sapountzis2015analytical, abbas2015mobility}, in which the wireless traffic is modeled based on the granularity of total traffic in each cell. In\cite{7917340}, a traffic-aware spatio-temporal model is proposed for the Internet of Things (IoT) supported by the uplink of a cellular network. The stability for three different transmission strategies are evaluated.
In \cite{8408843}, the random access mechanism in the cellular-based massive IoT networks is evaluated based on a spatio-temporal model for the wireless traffic, where the spatial topology is modeled by using tools from  stochastic geometry and the evolutions of queues are assessed based on stochastic process.
In \cite{7842367}, a user-centric mobility management mechanism is proposed to cope with the spatial movements of users and the temporal correlation of wireless channels in ultra-dense networks. In order to model the flow at individual users, our previous work \cite{7886285} combines the point process theory and the queueing theory, and bounds the statistical distribution of the signal-to-interference ratio (SIR) and the delay in heterogeneous cellular networks. Along this line of thought, subsequent works such as \cite{8436053} and \cite{8335767} explored the delay and security performance in wireless networks.

\subsection{Contributions}
In this paper, we quantitatively model the spatio-temporal properties for the scenario of MSA. Afterwards, we explore the temporal correlation of interference and successful transmission events at different time slots. Then, we verify that these correlations are indeed negligible for the scenario of MSA.
In view of this, we propose to evaluate the scenario of MSA by the analysis of an equivalent network in which all nodes are moving extremely fast. In order to demonstrate the accuracy of the proposed approach, we further compare the performance of the original static MSA network and that of the equivalent network of high mobility. The main contributions are summarized as follows.

\begin{itemize}
\item We quantitatively characterize the spatial randomness in deployment and the temporal evolution of queues for \emph{massive and sporadic access (MSA)}, and discuss the interference-limited regime and the noise-limited regime for various configurations, which play an important role in practical MSA system design.
\item Based on the spatio-temporal characterization, we evaluate the temporal correlation between different time slots for interference and successful transmission, and demonstrate that such correlation is negligible for MSA. Hence, we propose an approximation approach of high mobility equivalence, which significantly reduces the analytical complexity and provides an elegant solution to evaluate MSA.
\item By the proposed approximation approach of high mobility equivalence, we derive the closed-form expressions for the success probability, the mean delay and the average queue length. Numerical and simulation results verify the accuracy of the proposed approximation approach.
\end{itemize}

The remaining parts of the paper are organized as follows. Section \ref{sec:model} describes the spatial distribution model and the arrival process. Section \ref{sec:mathMSA} gives the mathematical definition and evaluates the effect of parameters for the scenario of MSA. Section \ref{sec:equivalence} discusses the temporal correlation of interference and successful transmission events by considering a backlogged network. Section \ref{sec:accuracy} assesses the accuracy of using the high mobility equivalence to analyze the scenario of MSA. Finally, Section \ref{sec:conclusions} concludes the paper.

\section{System Model}
\label{sec:model}
Without loss of generality, we consider the commonly used \emph{Poisson bipolar model} (see \cite[Definition 5.8]{haenggi2012stochastic} and \cite{haenggi2015meta}) to characterize the spatial distribution of the devices. In such a model, the spatial distribution of the transmitters is modeled as a homogeneous PPP $\Phi=\{x_i\}\in \mathbb{R}^2$ with intensity $\lambda$, and each transmitter is associated with a receiver at a fixed distance $r_0$ and a random orientation (see Figure \ref{fig:sysmodel}). We consider a typical link with the receiver located at the origin and the transmitter located at $x_0$. Then, $|x_0|=r_0$ is the distance between the typical transmitter at $x_0$ and the typical receiver at the origin.

\begin{figure*}
\centering
\includegraphics[width=0.9\textwidth]{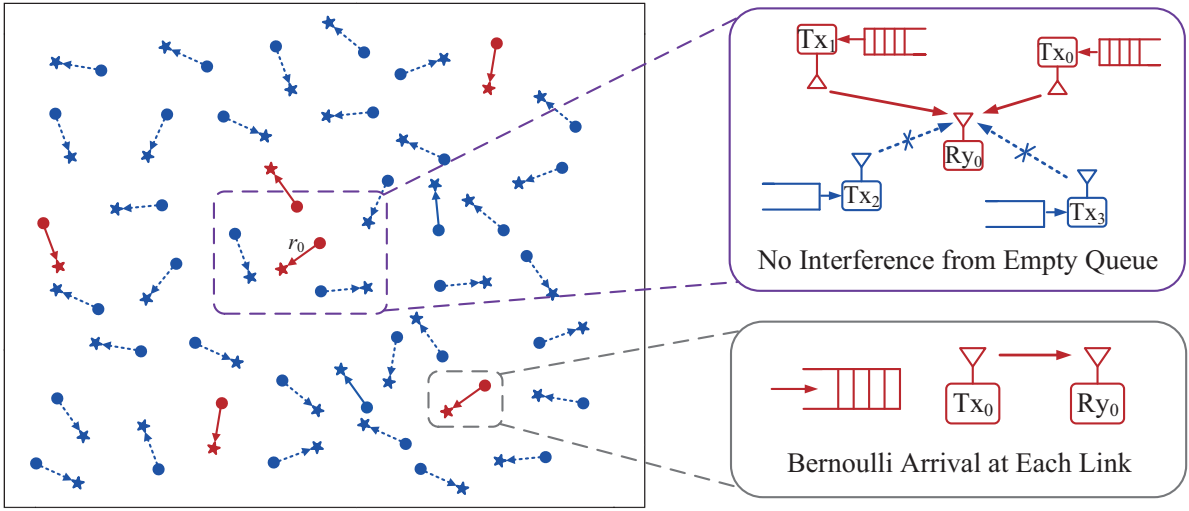}
\caption{The Poisson bipolar model for massive and sporadic access (MSA). The red solid links denote the active links whose queues are non-empty and also selected to transmit by random access, while the blue dotted links represent the silent links whose queues are empty or silenced by random access. The arrival processes at different transmitters are independent Bernoulli processes.}
\label{fig:sysmodel}
\end{figure*}

As for the temporal model, we consider a discrete-time queueing system, where the time is assumed to be divided into discrete time slots with equal duration. The transmission of each packet occupies exactly one time slot. Each transmitter is equipped with an infinite queue to store the incoming packets. In each time slot, if a queue is non-empty, it attempts to transmit its head-of-line packet with probability $p$. If the transmission attempt is successful, the packet will be removed from the queue. Otherwise, the packet will be put back to the head-of-line of the queue and waits to be retransmitted in the next time slot. In the scenario of MSA, the packets arrive at the transmitters as stochastic processes with very small arrival rates. In this paper, we assume that the packets arrival process at each transmitter is a Bernoulli process with arrival rate $\xi$ $(0\leq \xi\leq 1)$, which is widely used in modeling the discrete-time systems. According to the definition of the Bernoulli process, $\xi$ is also the probability that a packet arrives at a transmitter in each time slot. The arrival processes at different transmitters are assumed to be independent of each other.
To be rigorous, we assume that the early arrival model is used where a potential packet departure occurs at the moment immediately before the time slot boundaries, and a potential arrival occurs at the moment immediately after the time slot boundaries (see Figure \ref{fig:earlyarrival}).
If the time axis is marked by $0,1,...,t,...$, a potential departure occurs in the interval $(t^-,t)$, while a potential arrival occurs in the interval $(t,t^+)$.

\begin{figure}
\centering
\includegraphics[width=0.5\textwidth]{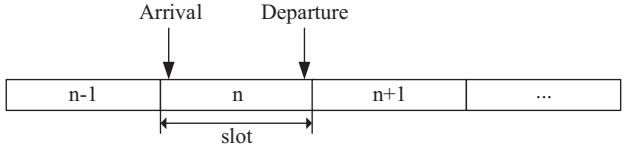}
\caption{Early arrival model where a potential packet departure occurs at the
moment immediately before the time slot boundaries, and a potential arrival occurs at the moment immediately after the time slot boundaries. }
\label{fig:earlyarrival}
\end{figure}

We assume that the network is static, i.e., the locations of all nodes are generated first as a realization of the PPP and then remain unchanged in all the following time slots. This assumption is very realistic since the locations of access points (or base stations, as the receiving nodes) and the devices (as the transmitting nodes)
in most of the MSA wireless networks are fixed after they are deployed.

On the other hand, in general, the nodes may move according to certain mobility model \cite{7399689}.
In the extreme situation, the position of each node can be assumed to change according to a high mobility random walk model \cite[Ch. 1.3]{baccelli2009stochastic}
so that its location in the next time slot can be considered as totally independent from that in the current time slot. We name this scenario with fast node movement as the high mobility scenario. The interference is greatly affected by the type of mobility for the nodes. For example, when the network is static, the nodes that generate interference in next time slots may come from the same set of node locations in the current time slot, which are determined when the nodes are first deployed. However, for the case of high mobility, the locations of nodes that generate interference can change over different time slots.

In this paper, we will demonstrate that a static  MSA network can be approximated by an equivalent high mobility network in terms of their correlation properties.

The transmit power of all nodes is assumed to be the same, which is normalized to be one.
The propagation loss of the electromagnetic wave is assumed to consist of two parts, i.e., the path loss and the fading.
The standard path loss model is used, i.e., the path loss between a transmitter and a receiver with distance $r$ apart is denoted by $l(r)=r^{-\alpha}$, where $\alpha$ is the path loss exponent with $\alpha>2$. The fading model is assumed to be the commonly used Rayleigh block fading with the shadowing being ignored.
The power fading coefficients stay unchanged during each time slot and are spatially and temporally independent under the exponential distribution with unit mean between different time slots.

The normalized power of thermal noise is assumed to be $W$. Note that whether the scenario of MSA belongs to the interference-limited regime or the noise-limited regime cannot be determined since both the node intensity $\lambda$ and the arrival rate $\xi$ may influence the relationship between the interference and the noise, which will be quantitatively evaluated in the following sections.

In each time slot, a transmitter is active only when its queue is non-empty and it is allowed to transmit (with probability $p$). Let $k\in\mathbb{N}^+$ be the index of the time slots and $\Phi_k\in\Phi$ be the set of all active transmitters in the time slot $k$. Note that the set $\Phi_k$ varies with the time since both the statuses of queues and the scheduled results of random access are different in different time slots. Then, the interference at the typical receiver located at the origin $o$ in time slot $k$ is
\begin{equation}
I_k=\sum_{x\in\Phi\backslash\{x_0\}}h_{k,x}|x|^{-\alpha}\mathbf{1}(x\in\Phi_k),
\end{equation}
where $h_{k,x}$ is the fading coefficient between the interfering transmitter $x$ and the typical receiver at the origin.
Since the original MSA network is static, the set of transmitters $\Phi$ is independent of the index of time slot $k$.

When the typical transmitter at $x_0$ attempts to deliver a packet to the typical receiver at the origin, the signal to interference plus noise ratio (SINR) at the typical receiver when it is active in time slot $k$ is
\begin{equation}
\mathrm{SINR}_k=\frac{h_{k,x_0}r_0^{-\alpha}}{\sum_{x\in\Phi\backslash\{x_0\}}h_{k,x}|x|^{-\alpha}\mathbf{1}(x\in\Phi_k)+W}. \label{eqn:SINRk}
\end{equation}

The SINR threshold for successfully delivering a packet is $\theta$, i.e., a transmission attempt of a link is successful only when the SINR of such link is above the threshold $\theta$. Then, the success probability for the typical link when it is active in time slot $k$ is
\begin{equation}
\mathcal{P}_k=\mathbb{P}\{\mathrm{SINR}_k>\theta\}.
\end{equation}

Note that the interval between two adjacent packet arrivals is a geometrically distributed random variable due to the Bernoulli arrival.
Therefore, the queueing process at the typical link is a discrete-time Geo/G/1 queueing system \cite{zhang2001discrete}.
In particular, when $\mathcal{P}_k$ is the same for all $k\in\mathbb{N}^+$, i.e., the success probability for the typical link in all time slots is the same, the service time (in number of time slots) of each packet follows a geometric distribution with the parameter $\mu\triangleq p\mathcal{P}_k, \forall k\in\mathbb{N}^+$. The parameter $\mu$ is also the mean service rate for the queueing system.
In this case, since the service time is also a geometric distributed random variable, the queueing system can be denoted by Geo/Geo/1.
We present the following lemma which gives the mean delay for a Geo/Geo/1 queueing system.

\begin{lem}
\label{lem:meandelay}
For a discrete-time Geo/Geo/1 queueing system with packet arrival rate $\xi$ and service rate $\mu$ ($\xi<\mu$), the mean delay is
\begin{equation}
D=\frac{1-\xi}{\mu-\xi}. \label{eqn:meandelay}
\end{equation}
\end{lem}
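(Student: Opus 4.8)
The plan is to analyze the number of packets residing at the typical transmitter as a discrete-time Markov chain and then recover the mean delay via Little's law. First I would let $N_k$ be the queue occupancy (including the packet in service, if any) observed at the instant $k^+$, i.e.\ right after the potential arrival in the early-arrival convention of Figure~\ref{fig:earlyarrival}. Since at most one packet arrives and at most one departs per slot, $\{N_k\}$ is a skip-free (birth--death--type) Markov chain on $\mathbb{N}$: from a non-empty state $n\geq 1$ it jumps to $n+1$ with probability $\xi(1-\mu)$ (a fresh arrival but no service completion), to $n-1$ with probability $\mu(1-\xi)$ (a service completion but no arrival), and remains at $n$ otherwise; from the empty state $0$ it jumps to $1$ with probability $\xi$ and stays at $0$ with probability $1-\xi$. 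Here the geometric (memoryless) service time is what keeps the state space one-dimensional, and the Bernoulli arrivals are what keep the chain skip-free.

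Next I would solve the stationary equations. Because the chain is skip-free it is reversible, so detailed balance gives $\pi_1=\pi_0\,\xi/[\mu(1-\xi)]$ and $\pi_{n+1}=\sigma\,\pi_n$ for $n\geq 1$ with $\sigma\triangleq\xi(1-\mu)/[\mu(1-\xi)]$. The hypothesis $\xi<\mu$ is exactly the condition $\sigma<1$ that makes the tail summable (and the queue positive recurrent); normalizing and using $1-\sigma=(\mu-\xi)/[\mu(1-\xi)]$ yields $\pi_0=1-\xi/\mu$ and, after summing the resulting geometric series, the mean occupancy $\bar N=\sum_{n\geq1}n\,\pi_n=\xi(1-\xi)/(\mu-\xi)$.

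Finally, since $\xi<\mu$ the queue is stable and every admitted packet eventually departs, so the long-run throughput equals the arrival rate $\xi$; the discrete-time form of Little's law then gives $D=\bar N/\xi=(1-\xi)/(\mu-\xi)$, which is \eqref{eqn:meandelay}.

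The step I expect to be the main obstacle is the intra-slot bookkeeping of the early-arrival model: one must fix \emph{consistently} at which sub-slot epoch $N_k$ is sampled (before versus after the arrival instant, and before versus after the departure instant) and pair that with the matching statement of Little's law, because this choice is precisely what produces the $(1-\xi)$ numerator rather than a bare $1/(\mu-\xi)$ or a $(1-\mu)$ factor. I would cross-check the final expression at the extremes $\mu=1$ (each packet cleared in one slot, so $D=1$) and $\xi\uparrow\mu$ (so $D\to\infty$). An equivalent and perhaps cleaner route, bypassing the occupancy chain, is to tag an arriving packet, observe that by memoryless service it must wait out $N^-+1$ i.i.d.\ geometric($\mu$) service periods where $N^-$ is the number it finds ahead of it, and then specialize the discrete-time Pollaczek--Khinchine mean-value formula for the Geo/G/1 queue in \cite{zhang2001discrete} to a geometric service-time distribution.
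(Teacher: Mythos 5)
Your proof is correct, but it takes a genuinely different route from the paper's. The paper does not analyze the occupancy chain at all: it invokes the mean-delay formula for the Geo/G/1 \emph{retrial} queue from \cite[Corollary 2]{atencia2004discrete}, sets the retrial generating function $A(\cdot)\equiv 1$ to kill the retrial terms, and then substitutes the first and second \emph{factorial} moments of the geometric service time, $\beta_1=1/\mu$ and $\beta_2=2/\mu^2-2/\mu$, into the resulting Pollaczek--Khinchine-type expression $D=\beta_1+\xi\beta_2/\bigl(2(1-\xi/\mu)\bigr)$ to obtain $(1-\xi)/(\mu-\xi)$. That is essentially the ``cleaner route'' you mention at the end of your proposal, except the paper outsources the P--K formula to a reference rather than deriving it. Your birth--death argument is self-contained and elementary: the transition probabilities $\xi(1-\mu)$, $\mu(1-\xi)$ and the boundary rate $\xi$ are the correct ones under the early-arrival convention, detailed balance gives $\pi_0=1-\xi/\mu$ (consistent with the utilization $\xi/\mu$ used later in the paper), the geometric sum gives $\bar N=\xi(1-\xi)/(\mu-\xi)$, and Little's law closes the argument; your sanity check at $\mu=1$ confirms the sampling-epoch bookkeeping is consistent. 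The trade-off is scope: the paper's citation-based route covers general (non-geometric) service times, i.e.\ Geo/G/1, whereas your chain argument relies essentially on the memorylessness of the geometric service distribution and would not extend beyond Geo/Geo/1 --- which is, however, all that the lemma claims and all that the rest of the paper uses.
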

\begin{proof}
From \cite[Corollary 2]{atencia2004discrete}, the mean delay (including the queueing delay and the service delay) for a Geo/G/1 queueing system with arrival rate $\xi$ is
\begin{equation}
D=\beta_1+\frac{2\overline{\xi}(\beta_1-1)(1-A(\overline{\xi}))+\xi\beta_2}{2(\xi+\overline{\xi}A(\overline{\xi})-\xi/\mu)},
\end{equation}
where $\overline{\xi}=1-\xi$, $\beta_1$ and $\beta_2$ are the first and the second factorial moments of the service time, $A(\cdot)$ is the generating function of the successive interretrial times.

Since we consider a standard Geo/G/1 queueing system without retrial, i.e., the packet at the head of the queue immediately commences its service whenever the server is idle, the successive interretrial time is always zero, i.e., $A(x)=1, \forall x$. Thus, the mean delay is reduced to
\begin{equation}
D=\beta_1+\frac{\xi\beta_2}{2(1-\xi/\mu)}. \label{eqn:meandelayproof1}
\end{equation}
Note that the service time is a geometric distributed random variable with mean $1/\mu$ for the Geo/Geo/1 queueing system. The first and second raw moments of the service time are $1/\mu$ and $2/\mu^2-1/\mu$ respectively. Then, the first and the second factorial moments of the service time are $\beta_1=1/\mu$ and $\beta_2=2/\mu^2-2/\mu$. Plugging these results into (\ref{eqn:meandelayproof1}), we obtain the lemma.
\end{proof}


\section{Mathematical Description of MSA}
\label{sec:mathMSA}
To explore the performance of MSA, we first give the explicit mathematical definition for the scenario of MSA. The two main parameters related to the MSA are the density of nodes $\lambda$ and the arrival rate $\xi$.
A simple way to define the scenario of MSA is to put absolute limitations on the density $\lambda$ and the arrival rate $\xi$ which are independent of channel parameters. However, this definition of completely ignoring the channel parameters may not be suitable, for example, even the densely deployed nodes may demonstrate the properties of sparse deployment for large path loss exponent $\alpha$. Therefore, in order to characterize the property of ``massive'' and ``sporadic'', we propose the following definitions which use the minimum tolerable success probability and the maximum tolerable delay to determine the ranges of $\lambda$ and $\xi$.
\begin{defn}
\label{def:massive}
A wireless network is defined as ``massive'' if and only if the density of devices $\lambda\in R^+$ satisfies
\begin{equation}
\lambda\geq\lambda_0\triangleq-\frac{1}{C_0}(\theta Wr_0^\alpha+\ln\varepsilon), \label{eqn:massive}
\end{equation}
where $\delta=2/\alpha$, $C_0=\pi \theta^\delta r_0^2\Gamma(1+\delta)\Gamma(1-\delta)$, $\Gamma(\cdot)$ is the gamma function, and $\lambda_0$ denotes the minimum density of transmitters that guarantees the success probability of the typical link being smaller than a predefined threshold $\varepsilon$ ($0<\varepsilon<e^{-\theta Wr^\alpha}$) when $\xi\rightarrow1$ and $p=1$, i.e., the backlogged case with all transmitters being active.
\end{defn}
\begin{rem}
The reason for using (\ref{eqn:massive}) to distinguish whether the number of devices is massive or not is that the propagation features (such as the path loss exponent) may also make sense in deciding whether a scenario belongs to ``massive'' or not.
The assumption with $\xi\rightarrow1$ and $p=1$ is the worst case that the interference is the largest, i.e., all transmitters are active and cause interference to the typical link.
Note that the expression for the success probability in a Poisson network is well-known, which is \cite[Equation (5.14)]{haenggi2012stochastic}
\begin{equation}
\mathcal{P}_{\rm Poisson}=\exp(-\lambda C_0-\theta Wr_0^\alpha). \label{eqn:succPoisson}
\end{equation}
Letting $\mathcal{P}_{\rm Poisson}\leq\varepsilon$, we get the inequality in Definition \ref{def:massive}.
\end{rem}

Note that when ignoring the interference and only considering the effect of the thermal noise, the success probability for the typical link in any time slot is $e^{-\theta Wr_0^\alpha}$. In order to make the limitation on the success probability meaningful, the threshold $\varepsilon$ for success probability should satisfy the following inequality
\begin{equation}
0<\varepsilon<e^{-\theta Wr_0^\alpha}.
\label{eqn:limit}
\end{equation}
Otherwise, if $\varepsilon\geq e^{-\theta Wr_0^\alpha}$, the inequality $\mathcal{P}_{\rm Poisson}\leq\varepsilon$ always holds since $\mathcal{P}_{\rm Poisson}\leq e^{-\theta Wr_0^\alpha}$.

\begin{defn}
\label{def:sporadic}
A wireless network is defined as ``sporadic'' if and only if the arrival rate of packets $\xi\in[0,1]$ satisfies
\begin{equation}
\xi\leq\xi_0\triangleq\frac{\beta e^{-\theta Wr_0^\alpha}-1}{\beta-1}, \label{eqn:sporadic}
\end{equation}
where $\xi_0$ denotes the maximum arrival rate that guarantees the mean delay of the typical link being smaller than a predefined threshold $\beta$ ($\beta>e^{\theta Wr_0^\alpha}$) when $\lambda\rightarrow0$ and $p=1$, i.e., the case without interference and random access.
\end{defn}
\begin{rem}
The assumption $\lambda\rightarrow0$ corresponds to the best case where the interference is ignored.
When the interference is ignored, the success probability for the typical link in any time slot is the same, which is $e^{-\theta Wr_0^\alpha}$.
Then, when $p=1$, the queueing process at the typical link is a Geo/Geo/1 queueing system with the service rate $e^{-\theta Wr_0^\alpha}$ packet per time slot.
If the arrival rate $\xi\geq e^{-\theta Wr_0^\alpha}$, the queue becomes unstable, and the mean delay $D$ will be infinite (i.e., $D=\infty$).
If the arrival rate $\xi$ satisfies $\xi<e^{-\theta Wr_0^\alpha}$, according to the equation (\ref{eqn:meandelay}) in Lemma \ref{lem:meandelay}, the mean delay for each packet is
\begin{equation}
D=\frac{1-\xi}{e^{-\theta Wr_0^\alpha}-\xi}. \label{eqn:rem2delay}
\end{equation}
Letting the mean delay be smaller than the predefined threshold $\beta$, i.e., $D\leq\beta$, we get the inequality in Definition \ref{def:sporadic}.
\end{rem}

Note that when the arrival rate $\xi$ approaches $0$, the mean delay given by (\ref{eqn:rem2delay}) approaches $e^{\theta Wr_0^\alpha}$, which is the smallest mean delay that could be achieved when the interference is ignored and $p=1$.
Therefore, in order to make the limitation on the mean delay meaningful, the threshold $\beta$ for the mean delay should satisfy the following inequality
\begin{equation}
\beta>e^{\theta Wr_0^\alpha}.
\end{equation}

With the above definition for ``massive'' and ``sporadic'', we define the scenario of MSA as follows.
\begin{defn}
\label{def:MSA}
A wireless network is defined as ``\emph{massive and sporadic access}'' if and only if the density of devices satisfies $\lambda\geq\lambda_0$, and the arrival rate of packets satisfies $\xi\leq\xi_0$, where $\lambda_0$ and $\xi_0$ are the critical values given by (\ref{eqn:massive}) and (\ref{eqn:sporadic}).
\end{defn}

\begin{figure}
\centering
\includegraphics[width=0.45\textwidth]{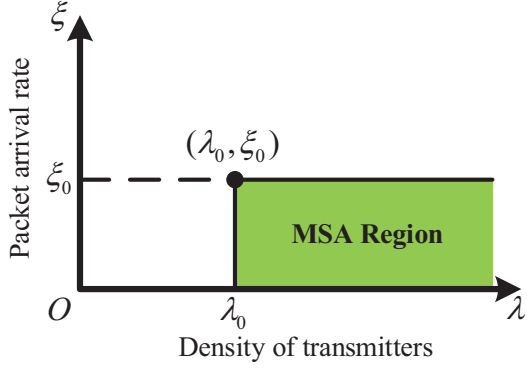}
\caption{Illustration of the region for massive and sporadic access.}
\label{fig:MSAregion}
\end{figure}

In particular, we define the \emph{MSA region} as follows.
\begin{defn}
\label{def:MSAregion}
The MSA region is defined as the range of the two-tuple $(\lambda, \beta)\in R^+\times[0,1]$ within which the corresponding wireless network will be MSA (see Figure \ref{fig:MSAregion}).
\end{defn}

\subsection{Effect of Parameters on MSA Region}

\begin{figure}
  \centering
  \subfigure[Effect of path loss exponent on MSA region.]{
    \label{fig:MSAregion_vary:a} 
    \includegraphics[width=0.45\textwidth]{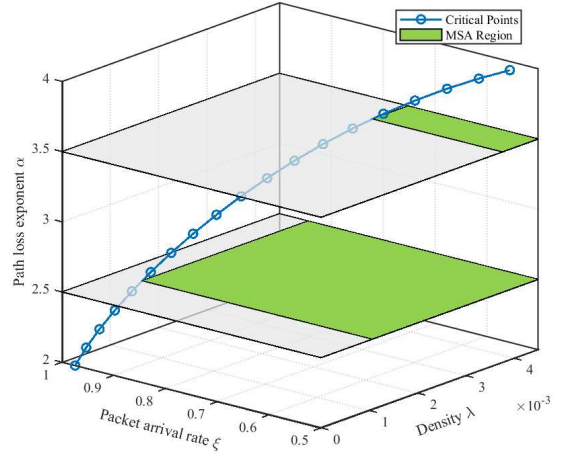}}
  \subfigure[Effect of thermal noise on MSA region.]{
    \label{fig:MSAregion_vary:b} 
    \includegraphics[width=0.45\textwidth]{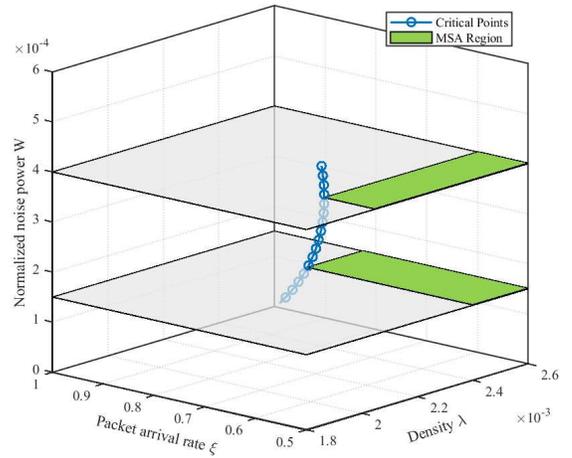}}
  \caption{Illustration of the region for massive and sporadic access. The distance between each transmitter and the associated receiver is $r_0=5$, and the SINR threshold is $\theta=10{\rm dB}$. The threshold for success probability when $\xi\rightarrow1$ is $\varepsilon=0.1$, and the threshold for mean delay when $\lambda\rightarrow0$ is $\beta=50$. The path loss exponent $\alpha$ increases from $2$ to $4$ when fixing the normalized noise $W=10^{-4}$, while $W$ increases from $10^{-6}$ to $10^{-3.3}$ when fixing $\alpha=3$.}
  \label{fig:MSAregion_vary} 
\end{figure}


The shape of MSA region is affected by different network parameters, such as the path loss exponent and the thermal noise. In order to explore the effect of these parameters, we plot the change of MSA region when increasing the path loss exponent $\alpha$ and the normalized thermal noise $W$ in Figure \ref{fig:MSAregion_vary}.
We observe from Figure \ref{fig:MSAregion_vary:a} that the MSA region becomes narrow when increasing $\alpha$. This is because the path loss is enlarged due to the increment of $\alpha$, in which case the network should be denser so that the condition for ``massive'' can be achieved, and the arrival rate of packets should be smaller so that the condition for ``sporadic'' can be achieved. In particular, the density of the critical point goes to zero when $\alpha$ approaches to $2$, which is also observed through the equation (\ref{eqn:massive}), indicating that any density of transmitters could be considered as ``massive'' in the free space propagation model with $\alpha=2$. Figure \ref{fig:MSAregion_vary:b} reveals that the change of $W$ has a great influence on the critical arrival rate $\xi_0$ but less effect on the critical density $\lambda_0$. When $W$ approaches to zero, the critical arrival rate $\xi_0$ goes to $1$, illustrating that any arrival rate will be considered as ``sporadic'' for the case where the thermal noise is ignored.

\subsection{Interference-limited and Noise-limited}
Due to the spectral scarcity, most wireless networks are designed to be interference-limited, i.e. the interference rather than the thermal noise dominates the network performance. However, in the case of MSA, even if the potential transmitters are ultra dense, a wireless network may still be noise-limited since the arrival rate of packets is extremely small resulting in minor interference. In this subsection, we discuss the classification of the scenario of MSA to identify whether it belongs to interference-limited or noise-limited.
\subsubsection{Interference-limited}
In order to quantitatively define the interference-limited regime, we consider a simplified system in which a packet failed for transmitting will be discarded instead of being retransmitted. Therefore, the interference in the simplified system will always be a lower bound for that in the original system. We define a scenario as interference-limited if the ratio between the success probability ignoring the noise and the success probability ignoring the interference is less than a small threshold $\eta$ in the corresponding simplified system. Note that in the simplified system, the probability that a link being active in a time slot equals to the product of the probability that there is a packet arriving in the said time slot $\xi$ and the transmit probability $p$, which is $p\xi$ according to the definition of the Bernoulli arrival process. The success probability for an active link when ignoring the noise and considering only the effect of the interference is
\begin{equation}
\mathcal{P}_{\rm inter}=\exp(-\xi p\lambda C_0). \label{eqn:P_inter1}
\end{equation}
The success probability for an active link when ignoring the interference and considering only the effect of the noise is
\begin{equation}
\mathcal{P}_{\rm noise}=\exp(-\theta Wr_0^\alpha).\label{eqn:P_noise1}
\end{equation}
Combining (\ref{eqn:P_inter1}) and (\ref{eqn:P_noise1}) with the condition for interference-limitation $\mathcal{P}_{\rm inter}/\mathcal{P}_{\rm noise}\leq\eta$, we get
\begin{equation}
\xi\lambda\geq\frac{1}{pC_0}(\theta Wr_0^\alpha-\ln\eta). \label{eqn:boundary_inter}
\end{equation}
The above inequality gives the limitations on $\xi$ and $\lambda$ that makes a scenario interference-limited.

\subsubsection{Noise-limited}
For the noise-limited regime, the success probability for delivering a packet at each time slot is approximated the same, which is $p\exp(-\theta Wr_0^\alpha)$. Then, the queueing process at each link can be considered as a Geo/Geo/1 queueing system with arrival rate $\xi$ and service rate $p\exp(-\theta Wr_0^\alpha)$.
The probability of a queue being non-empty equals to the utilization of the queueing system $\xi\exp(\theta Wr_0^\alpha)/p$. The active transmitters constitute an independent thinning version of the original PPP $\Phi$ with thinning probability $\xi\exp(\theta Wr_0^\alpha)$.
Therefore, the success probability when ignoring the noise and considering only the effect of interference is
\begin{equation}
\mathcal{P}_{\rm inter}=\exp(-\xi \lambda C_0\exp(\theta Wr_0^\alpha)). \label{eqn:P_inter2}
\end{equation}
The success probability when ignoring the interference and
considering only the effect of thermal noise is
\begin{equation}
\mathcal{P}_{\rm noise}=\exp(-\theta Wr_0^\alpha).\label{eqn:P_noise2}
\end{equation}

Similar to the case of interference-limitation, we define a scenario as noise-limited if the ratio between the success probability ignoring the interference and
the success probability ignoring the noise is less than
a small threshold $\eta$.
Combining (\ref{eqn:P_inter2}) and (\ref{eqn:P_noise2}) with the condition for noise-limitation $\mathcal{P}_{\rm noise}/\mathcal{P}_{\rm inter}\leq\eta$, we get
\begin{equation}
\xi\lambda\leq\frac{\theta Wr_0^\alpha+\ln\eta}{C_0\exp(\theta Wr_0^\alpha)}. \label{eqn:boundary_noise}
\end{equation}
The above inequality gives the limitations on $\xi$ and $\lambda$ that makes a scenario noise-limited.

\begin{figure}
\centering
\includegraphics[width=0.5\textwidth]{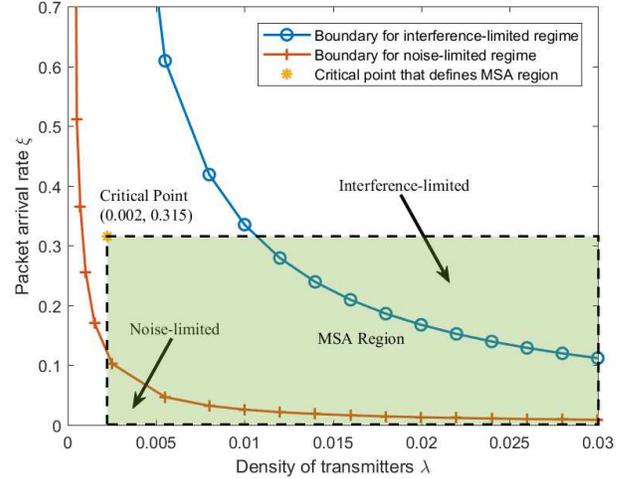}
\caption{Illustration of interference-limited regime and noise-limited regime for the scenario of massive and sporadic access. The distance between each transmitter and the associated receiver is $r_0=5$, the SINR threshold is $\theta=10{\rm dB}$, and the transmit probability is $p=1$. The threshold for success probability when $\xi\rightarrow1$ is $\varepsilon=0.1$, and the threshold for mean delay when $\lambda\rightarrow0$ is $\beta=50$. The path loss exponent is $\alpha=3.5$, the normalized noise is $W=10^{-3.4}$, and the threshold to distinguish different regimes is $\eta=0.5$.}
\label{fig:diffregimes}
\end{figure}

Figure \ref{fig:diffregimes} shows the value of $(\lambda, \xi)$ that belongs to the interference-limited regime or the noise-limited regime in the scenario of MSA. From Figure \ref{fig:diffregimes}, we observe that a wireless network can still be noise-limited as long as the arrival rate is small even if the deployed nodes are highly dense. Figure \ref{fig:diffregimes} also reveals the design insight that there exists a minimum value for the density $\lambda$ to make a network interference-limited in the scenario of MSA. In other words, if the density of the deployed nodes is less than certain value in the scenario of MSA, a wireless network will never be interference-limited for all arrival rate.

\begin{figure}
\centering
\includegraphics[width=0.5\textwidth]{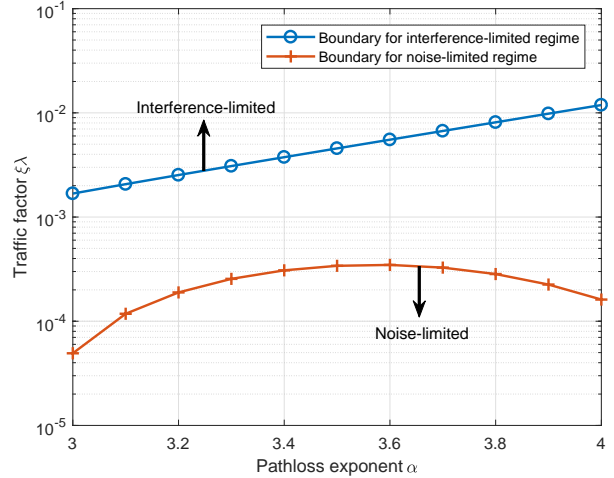}
\caption{Boundaries of the traffic factor $\xi\lambda$ for the interference-limited regime and the noise-limited regime in the scenario of massive and sporadic access. The distance between each transmitter and the associated receiver is $r_0=5$, the SINR threshold is $\theta=10{\rm dB}$, and the transmit probability is $p=1$. The normalized noise is $W=10^{-3.2}$, and the threshold to distinguish different regimes is $\eta=0.5$.}
\label{fig:boundary}
\end{figure}

In general, the product $\xi\lambda$ determines whether a wireless network works in the interference-limited regime or the noise-limited regime. Therefore, we define the product of the arrival rate $\xi$ and the density of transmitters $\lambda$ as the \emph{traffic factor}. Intuitively, the traffic factor describes the intensity of the traffic in a wireless network. In Figure \ref{fig:boundary}, we plot the boundaries of the traffic factor $\xi\lambda$ as functions of the pathloss exponent $\alpha$ for the interference-limited regime given by the inequality (\ref{eqn:boundary_inter}) and for the noise-limited regime given by the inequality (\ref{eqn:boundary_noise}) in the scenario of MSA. Since the Y axis is logarithmic, we observe that the boundary of the traffic factor $\xi\lambda$ for the interference-limited regime increases almost exponentially as the pathloss exponent $\alpha$ grows. We also observe that as $\alpha$ increases, the curve for the boundary of the noise-limited regime first grows and then goes down. This can be intuitively interpreted as that when $\alpha$ starts to increase, the interference decreases rapidly, and small traffic factor $\xi\lambda$ can make the network work in the noise-limited regime. However, if $\alpha$ continues to grow, the success probability of the desired link decreases due to the deterioration of the desired signal, resulting in more retransmissions, which could also be inferred from the equation (\ref{eqn:P_inter2}). In this case, the traffic in the network becomes heavier due to the retransmitted packets, leading to increased interference and more stringent conditions on the traffic fact $\xi\lambda$ to make the network noise-limited.

\section{Temporal Correlations}
\label{sec:equivalence}
In this section, before introducing the approach of high-mobility approximation where there is no temporal correlation between different time slots, we first evaluate the temporal correlations of interference and successful transmission in the original static network. By proving that these correlations are negligible in the scenario of MSA, we demonstrate the reasonability of the high-mobility approximation.

Since the queueing process at each individual transmitter in a wireless network depends the statuses (empty or non-empty) of queues at all transmitters, the queueing processes at different transmitters are highly coupled with each other, leading to the interacting queues problem which is rather difficult to cope with. Existing works have only derived the sufficient conditions and necessary conditions for the stability \cite{7486114}, or the upper and lower bounds for certain performance metrics \cite{7886285}. However, in the case of the MSA, the set of active transmitters that cause interference to the network changes dramatically over the time slots. Moreover, in the scenario of ``extremely massive'' and ``extremely sporadic'', the sets of active transmitters in different time slots may not intersect. Then, the interference in different times slots will be independent, resulting in the decoupling of the queueing processes for different queues in the wireless networks.

Intuitively, the analysis of a wireless network with large number of coupled queues for the scenario of MSA could be approximately equivalent to the analysis of a high mobility case where the interacting queues are decoupled, thereby greatly reducing the analytical complexity. However, the accuracy of such intuition requires to be quantitatively analyzed in order to use this equivalence to simplify the analysis of the MSA. Therefore, in this section, we explore the difference between the original static scenario of MSA and the high mobility case from the view of temporal correlations of interference and successful transmission. Note that in the high mobility network, both the interference and the successful transmission are independent between different time slots, i.e. the temporal correlations of both the interference and the successful transmission tend to zero.

To facilitate the analysis, we consider a backlogged version of the original network, in which the queues at all transmitters are backlogged and will never be empty.
In the backlogged network, each transmitter will be active independently with the same probability $p$.
Note that the interference is temporally correlated since a subset from the same set of potential transmitters are active in different time slots.
The locations of these potential transmitters are randomly deployed first as a realization of the PPP and then keep unchanged in all the following time
slots, which could be considered as the ``common randomness''.

\subsection{Temporal Correlation of Interference}
The temporal (Pearson's) correlation coefficient of the interference in the $i$-th time slot $I_i$ and that in the $j$-th time slot $I_j$ is defined as
\begin{equation}
\rho(I_i,I_j)\triangleq\frac{\mathrm{cov}(I_i,I_j)}{\sigma_{I_i}\sigma_{I_j}},
\end{equation}
where $\mathrm{cov}(I_i,I_j)=\mathbb{E}[I_iI_j]-\mathbb{E}[I_i]\mathbb{E}[I_j]$ is the covariance between $I_i$ and $I_j$, and $\sigma_{I_i}=\sigma_{I_j}=\sqrt{\mathbb{E}[I_i^2]-(\mathbb{E}[I_i])^2}$ is the standard deviation of $I_i$ and $I_j$.
The work in \cite{net:Ganti09CL} has already obtained the temporal correlation coefficient in remarkably simple form as
\begin{equation}
\rho(I_i,I_j)=\frac{p}{\mathbb{E}[h_{k,x}^2]},
\end{equation}
where $h_{k,x}$ is the power fading coefficient between an interfering
transmitter $x$ and the typical receiver at the origin in any time slot $k$.
Due to the Rayleigh fading assumption, the power fading coefficient $h_{k,x}$ is exponentially distributed with unit mean. Then, the temporal correlation coefficient becomes
\begin{equation}
\rho(I_i,I_j)=\frac{p}{2}. \label{eqn:rhoIij}
\end{equation}
The above equation indicates that the correlation coefficient of the interference between two different time slots grows linearly with the active probability $p$. Note that the correlation coefficient is not related to the density of the nodes $\lambda$. This observation indicates that in the scenario of MSA, the degree of temporally linear correlation of the interference will not be affected by increasing the density of links (i.e., more massive), while it decreases linearly as the arrival rate decreases (i.e., more sporadic).

\subsection{Temporal Correlation of Successful Transmission}
In practical system, we care more about the outcome of the transmission attempt rather than the interference level. Note that the success probability relies on the SINR, the denominator of which is determined by the interference and the thermal noise. Therefore, the temporal correlation of the interference induces the temporal correlation of the successful transmissions. Let $S_k=\mathbf{1}(\mathrm{SINR}_k>\theta)$ be the indicator that the SINR at the typical receiver in time slot $k$ is above the threshold $\theta$, i.e., $S_k$ is the indicator for successful transmission in time slot $k$.
The temporal correlation coefficient between $S_i$ and $S_j$, $i\neq j$ is
\begin{equation}
\rho(S_i,S_j)=\frac{\mathbb{E}[S_iS_j]-\mathbb{E}[S_i]\mathbb{E}[S_j]}{{\mathbb{E}[S_i^2]-(\mathbb{E}[S_i])^2}}. \label{eqn:rhoSiSj}
\end{equation}
Note that $\mathbb{E}[S_k^2]=\mathbb{E}[S_k]$ and $\mathbb{E}[S_k]=\mathcal{P}_k$, $\forall k\in\mathbb{N}^+$, we have
\begin{equation}
\rho(S_i,S_j)=\frac{\mathbb{P}\{\mathrm{SINR}_i>\theta,\mathrm{SINR}_j>\theta\}-\mathcal{P}_i^2}{{\mathcal{P}_i-\mathcal{P}_i^2}}.
\end{equation}
Due to the backlogged assumption and the independent thinning, the success probability $\mathcal{P}_i$ is given by the standard form for a Poisson network similar to the equation (\ref{eqn:succPoisson}) as follows
\begin{equation}
\mathcal{P}_i=\exp(- p\lambda C_0-\theta Wr_0^\alpha). \label{eqn:succBacklogged}
\end{equation}
Using the formula (\ref{eqn:SINRk}) and the exponential distribution property of $h_{k,x}$, we obtain the joint success probability as
\begin{eqnarray}
\mathbb{E}[S_iS_j]&\!\!=\!\!&\mathbb{P}\{\mathrm{SINR}_i>\theta,\mathrm{SINR}_j>\theta\} \nonumber\\
&\!\!=\!\!&\mathbb{E}\big\{e^{-\theta (I_i+W) r_0^\alpha}e^{-\theta (I_j+W) r_0^\alpha}\big\} \nonumber\\
&\!\!=\!\!&e^{-2\theta W r_0^\alpha}\mathbb{E}\prod_{x\in\Phi\setminus\{x_0\}}e^{-2\theta r_0^\alpha h_{k,x}|x|^{-\alpha}\mathbf{1}(x\in\Phi_k)} \nonumber\\
&\!\!=\!\!&e^{-2\theta W r_0^\alpha}\mathbb{E}\prod_{x\in\Phi\setminus\{x_0\}}
\Big(1-p+\frac{p}{1+2\theta r_0^\alpha|x|^{-\alpha}}\Big)
 \nonumber\\
 &\!\!\overset{(a)}{=}\!\!&\exp\Big(-2\theta W r_0^\alpha-2\pi p\lambda\int_0^\infty\frac{2\theta r_0^\alpha r\mathrm{d}r}{r^\alpha+2\theta r_0^\alpha}\Big)\nonumber\\
 &\!\!=\!\!&\exp(-2\theta W r_0^\alpha- p\lambda 2^\delta C_0). \label{eqn:jointsucc}
\end{eqnarray}
Plugging the above joint success probability into the equation (\ref{eqn:rhoSiSj}), we obtain the temporal correlation coefficient between $S_i$ and $S_j$ as
\begin{equation}
\rho(S_i,S_j)=\frac{\exp((2-2^\delta) p\lambda C_0)-1}{\exp(\theta Wr_0^\alpha+ p\lambda C_0)-1}. \label{eqn:rhoSiSj2}
\end{equation}

The temporal correlation coefficient of the successful transmissions in different time slots depends on the product of the transmit probability $p$ and the density of transmitters $\lambda$.
From the equation (\ref{eqn:rhoSiSj2}), we observe that when $p\lambda\rightarrow0$ or $p\lambda\rightarrow+\infty$, the temporal correlation coefficient $\rho(S_i,S_j)\rightarrow0$, indicating that the successful transmissions in different time slots tend to be linearly independent when $p\lambda$ is either very large or very small. This can be interpreted as that when $p\lambda\rightarrow0$, the interference could be ignored, and the thermal noise becomes the dominant factor that affects the successful transmission events. Due to the independence of the noise at different receivers, the successful transmission events at different time slots tend to be completely independent. Meanwhile, when $p\lambda\rightarrow+\infty$, the successful transmission events are affected by a large number of independent channel fading coefficients, leading to the independence of the successful transmission events in different time slots. In particular, we have the following lemma.

\begin{lem}
\label{lem:monotonicity}
The temporal correlation coefficient of the successful transmissions $\rho(S_i,S_j)$ is maximized if and only if $p\lambda={C_0^{-1}}{\ln t_0}$, where $t_0\in(1,+\infty)$ is the solution of the equation
\begin{equation}
(a-1)t_0^a-abt_0^{a-1}+1=0, \label{eqn:monoeqn}
\end{equation}
where $a=2-2^\delta$ and $b=\exp(-\theta Wr_0^\alpha)$.
\end{lem}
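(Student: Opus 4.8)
The plan is to reduce maximizing $\rho(S_i,S_j)$ to a one-dimensional problem and then pin down the maximizer by elementary calculus. First I would set $u\triangleq p\lambda C_0$; this is a legitimate change of variable since $\delta=2/\alpha\in(0,1)$ (because $\alpha>2$) makes $\Gamma(1+\delta),\Gamma(1-\delta)>0$ and hence $C_0>0$, while the same bound gives $2^\delta\in(1,2)$, so $a=2-2^\delta\in(0,1)$, and clearly $b=\exp(-\theta Wr_0^\alpha)\in(0,1)$. With this notation, (\ref{eqn:rhoSiSj2}) becomes
\[
\rho(S_i,S_j)=f(u)\triangleq\frac{b\,(e^{au}-1)}{e^{u}-b},\qquad u\ge0,
\]
and since $f$ is continuous with $f(0)=0$, $f(u)>0$ for $u>0$, and $f(u)=b\,e^{(a-1)u}\,\frac{1-e^{-au}}{1-be^{-u}}\to0$ as $u\to\infty$ (using $a-1<0$), it attains its maximum at an interior point where $f'=0$. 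So everything reduces to describing the stationary set of $f$.

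Next I would differentiate $f=N/D$ with $N=b(e^{au}-1)$ and $D=e^{u}-b$. A short computation gives $N'D-ND'=b\,e^{u}\,g(e^{u})$, where
\[
g(t)\triangleq(a-1)t^{a}-ab\,t^{a-1}+1,
\]
so, since $b\,e^{u}/D^{2}>0$, we have $\operatorname{sign}f'(u)=\operatorname{sign}g(e^{u})$, and the critical-point condition is exactly (\ref{eqn:monoeqn}) with $t=e^{u}$. It therefore suffices to show that $g$ has a unique zero $t_0\in(1,\infty)$ and that it corresponds to a maximizer.

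For uniqueness I would use $g'(t)=a(a-1)t^{a-2}(t-b)$: on $[1,\infty)$ we have $a(a-1)<0$, $t^{a-2}>0$, and $t-b>0$ (as $b<1\le t$), hence $g'<0$ and $g$ is strictly decreasing. Together with $g(1)=a(1-b)>0$ and $\lim_{t\to\infty}g(t)=-\infty$ (the term $(a-1)t^{a}$ dominates with negative coefficient, while $ab\,t^{a-1}\to0$), the intermediate value theorem delivers a single root $t_0\in(1,\infty)$. Strict monotonicity of $g$ then forces $f'(u)>0$ for $u<\ln t_0$ and $f'(u)<0$ for $u>\ln t_0$, so $u=\ln t_0$ is the unique maximizer; translating back, $\rho(S_i,S_j)$ is maximized precisely when $p\lambda C_0=\ln t_0$, i.e.\ $p\lambda=C_0^{-1}\ln t_0$, which gives both directions of the claimed equivalence.

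I expect the only real work to be the derivative computation that isolates the clean factor $g(e^{u})$, and the careful reading off of signs in $g$ and $g'$ from the ranges $a\in(0,1)$ and $b\in(0,1)$; the remainder is a routine intermediate-value-theorem plus first-derivative-test argument, with no delicate estimates involved.
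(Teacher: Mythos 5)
Your proposal is correct and follows essentially the same route as the paper: both reduce $\rho(S_i,S_j)$ to the one-variable function $(t^a-1)/(t-b)$ (you via $t=e^{u}$), identify the numerator of its derivative as $g(t)=(a-1)t^a-abt^{a-1}+1$, and use the strict monotonicity of $g$ on $(1,\infty)$ to get a unique critical point. Your version is, if anything, slightly tighter---you verify $g'<0$, $g(1)>0$, and $g\to-\infty$ explicitly and close with a first-derivative test, where the paper asserts the monotonicity of $g$ without computation and argues via concavity together with $f(1)=f(+\infty)=0$.
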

\begin{proof}
Letting $a=2-2^\delta$, $b=\exp(-\theta Wr_0^\alpha)$, and
$$f(t)=\frac{t^a-1}{t-b},$$ we have $0<a, b<1$ and $\rho(S_i,S_j)=bf(\exp(C_0p\lambda))$.
Note that the derivative of $f(t)$ is
\begin{equation}
f'(t)=\frac{(a-1)t^a-abt^{a-1}+1}{(t-b)^2}. \label{eqn:fder}
\end{equation}
It can be verified that the numerator $g(t)=(a-1)t^a-abt^{a-1}+1$ is monotone decreasing when $t>1$. Thus, $f'(t)$ is also monotone decreasing when $t>1$, demonstrating that $f(t)$ is a concave function when $t>1$. Since the equality $f(1)=f(+\infty)=0$ holds, there is one and only one maximum value of $f(t)$ for $t\in(1,+\infty)$. Let $t_0=\mathop{\arg\max}\limits_{t\in(1,+\infty)}f(t)$, and $t_0$ should satisfy the following equation
\begin{equation}
(a-1)t_0^a-abt_0^{a-1}+1=0.
\end{equation}
Due to the equation $\rho(S_i,S_j)=bf(\exp(C_0p\lambda))$, we get the result in Lemma \ref{lem:monotonicity}.
\end{proof}

\begin{figure}
\centering
\includegraphics[width=0.5\textwidth]{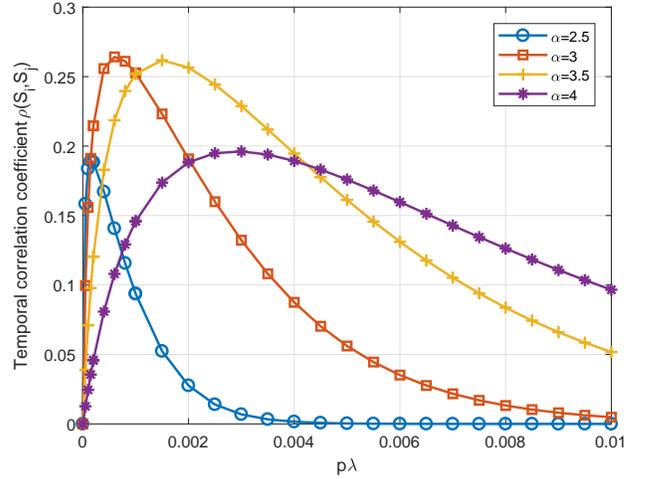}
\caption{Temporal correlation coefficient of the successful transmissions $\rho(S_i,S_j)$ as functions of $p\lambda$ for different path loss exponents. The distance between each transmitter and the associated receiver is $r_0=5$, and the SINR threshold is $\theta=10{\rm dB}$. The normalized noise is $W=10^{-4}$.}
\label{fig:SuccCorr}
\end{figure}

In Figure \ref{fig:SuccCorr}, we plot the temporal correlation coefficient of the successful transmissions $\rho(S_i,S_j)$ as functions of $p\lambda$ for different path loss exponents. Figure \ref{fig:SuccCorr} verifies Lemma \ref{lem:monotonicity} that there is one and only one stationary point that maximizes the temporal correlation coefficient. We also observe that as the path loss exponent $\alpha$ increases from $2.5$ to $4$, the maximum temporal correlation coefficient first grows then decreases. This is because the attenuation of the interference is small for small $\alpha$, in which case the interference comes from plentiful sources with non-negligible interfering signals. Since the interference levels from different sources are independent due to the independent fading, the successful transmission events in different time slots become less correlated.
When $\alpha$ is large, the attenuation of the interference is remarkable, and the thermal noise which is independent at different receivers becomes dominant, leading to the reduction of the temporal correlation.

\begin{figure}
\centering
\includegraphics[width=0.5\textwidth]{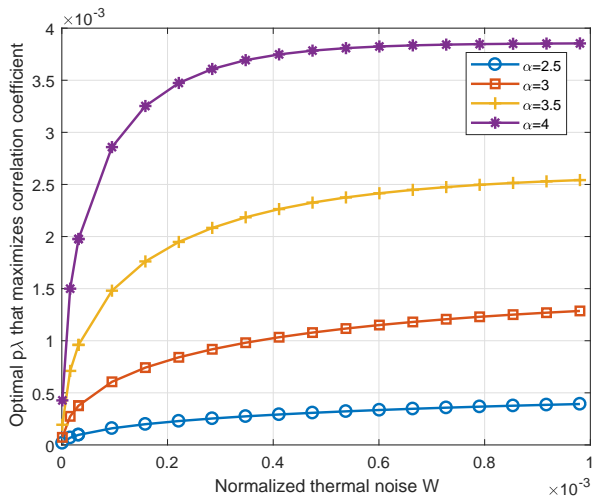}
\caption{The optimal $p\lambda$ that maximizes temporal correlation coefficient $\rho(S_i,S_j)$ as functions of the normalized thermal noise $W$ for different path loss exponents. The distance between each transmitter and the associated receiver is $r_0=5$, and the SINR threshold is $\theta=10{\rm dB}$.}
\label{fig:MaxSuccCorr}
\end{figure}

In Figure \ref{fig:MaxSuccCorr}, we plot the optimal $p\lambda$ that maximizes temporal correlation coefficient $\rho(S_i,S_j)$ as functions of the normalized thermal noise $W$ for different path loss exponents. It is observed that the optimal $p\lambda$ that maximizes temporal correlation increases as the path loss exponent $\alpha$ or the normalized thermal noise $W$ grows. Intuitively, the thermal noise becomes dominant other than the interference when $\alpha$ or $W$ increases, and $p\lambda$ should be enlarged to maintain the same level of correlation. Moreover, Figure \ref{fig:MaxSuccCorr} implies that for large thermal noise $W$, the optimal $p\lambda$ that maximizes the temporal correlation tends to stabilize at the same value. In particular, if $W\rightarrow+\infty$, the equation (\ref{eqn:monoeqn}) becomes $(a-1)t_0^a+1=0$. By solving this equation, we get the optimal $p\lambda$ that maximizes temporal correlation of successful transmission events when $W\rightarrow+\infty$ as
\begin{equation}
p\lambda=-\frac{\ln(1-a)}{aC_0}=-\frac{\ln(2^\delta-1)}{(2-2^\delta)C_0}.
\end{equation}

Through the expression for the temporal correlation coefficient of the interference given by (\ref{eqn:rhoIij}), the transmit probability $p$ should be very small so that the interference correlation among different time slots can be ignored. On the other hand, through the expression for the temporal correlation coefficient of the successful transmission events given by (\ref{eqn:rhoSiSj2}), $p\lambda$ should either be very small or very large so that the successful transmission correlation among different time slots can be ignored. Therefore, the temporal correlation of both the interference and the successful transmission can be ignored either when both $p$ and $\lambda$ are very small (corresponding to small $p$ and small $p\lambda$) or when $p$ is very small and $\lambda$ is very large (corresponding to small $p$ and large $p\lambda$). The former case where both $p$ and $\lambda$ are very small implies the noise-limited regime that the effect of the interference could be ignored, while the latter case where $p$ is very small and $\lambda$ is very large implies the aforementioned MSA scenario. In both cases, the wireless network can be equivalently analyzed by using a corresponding high mobility assumption networks (which has the same correlation property). Noting that the noise-limited regime is relatively much easier to evaluate quantitatively due to the approximation of ignoring the interference, we mainly focus our research on the MSA scenario (i.e. with small $p$ and large $\lambda$ values) throughout this paper.

In this section, we have discussed the temporal correlation coefficients of the interference and the successful transmission events by considering a backlogged network in which the queues at all transmitters will never be empty. The transmission of each link is controlled by the transmit probability $p$. For the original network with complex queue evolutions, these temporal correlation coefficients are extremely complicated to quantify. However, the results obtained in the backlogged network illustrate heuristically the temporal correlation in the original system.

\section{Equivalence by High Mobility}
\label{sec:accuracy}
In a practical wireless network with random packet arrival and mutual interference, the longer-term network performance metrics, such as the delay and the stability, are extremely difficult to quantify due to the interacting queues problem which is notoriously hard to analyze \cite{rao1988stability, ephremides1987delay, 7886285}.
In the previous discussions, fortunately, we have illustrated that a static wireless network in the scenario of MSA could be approximated by an equivalent high mobility case (in terms of their correlation properties), which greatly reduces the complexity of the quantitative analysis and design. In order to assess the accuracy of using the high mobility equivalence, we evaluate the empty probability and the mean delay at the stationary regime for the typical link in the scenario of MSA. Furthermore, we compare these two metrics for the original static network and the equivalent high mobility network.

In the high mobility case, the locations of all transmitters in current time slot can be considered as independent with those in the next time slot. Therefore, the spatial distributions of transmitters in different time slots are independent PPP with the same intensity $\lambda$. Unlike the static network where two queues at two nearby transmitters may affect the packet delivery of each other for long periods of time, the mutual effect between the same batch of queues in the high mobility case appears only in the current time slot. Thus, the interacting queues can be decoupled and considered as independent from each other in the high mobility case. Moreover, in the static network, the stationary distributions of the queue systems at different transmitters are diversified, while in the high mobility case the stationary distribution is the same for all queues.

Let $\zeta$ be the probability that a queue is non-empty at the stationary regime in the high mobility case. Then, all queues in the high mobility case are non-empty with the same probability $\zeta$ at the stationary regime. Since only the transmitters with non-empty queues are active with probability $p$, the locations of all active transmitters interfering with the typical link constitute a PPP with intensity $p\zeta\lambda$ in each time slot.
When the typical transmitter at $x_0$ attempts to deliver a packet at the stationary regime, the success probability in different time slots is the same, which is
\begin{equation}
\mathcal{P}_k=\exp\big(-p\zeta\lambda C_0-\theta Wr_0^\alpha\big), \forall k\in\mathbb{N}^+. \label{eqn:succHighMobility}
\end{equation}

In such a high mobility case, the queueing process at the typical link is a Geo/Geo/1 queueing system since the arrival process is geometric due to the Bernoulli arrival, and the service process is also geometric due to the retransmission mechanism with the same transmit probability $p$ and the same success probability $\mathcal{P}_k$. For the Geo/Geo/1 queueing system at the typical link, the arrival rate is $\xi$ while the service rate is $\mu=p\mathcal{P}_k$. Note that the condition for the queue at the typical link being stable is $\xi<\mu$, which turns into
\begin{equation}
\xi<p\exp\big(-p\zeta\lambda C_0-\theta Wr_0^\alpha\big). \label{eqn:stablecondition}
\end{equation}

The probability that the queue at the typical link is non-empty at the stationary regime can be obtained by evaluating the utilization of the Geo/Geo/1 queueing system, which turns into
\begin{equation}
\mathbb{P}\{\text{Queue\ is\ non-empty}\}=\frac{\xi}{\mu}=\frac{\xi}{p\mathcal{P}_k}. \label{eqn:non_empty}
\end{equation}
Plugging (\ref{eqn:succHighMobility}) into (\ref{eqn:non_empty}), we obtain the probability of the queue being empty, i.e. the \emph{non-empty probability}, as
\begin{equation}
\mathbb{P}\{\text{Queue\ is\ non-empty}\}=\frac{\xi}{p}\exp\big(p\zeta\lambda C_0+\theta Wr_0^\alpha\big). \label{eqn:non_empty2}
\end{equation}
Since the typical link is arbitrarily selected from a wireless network, the non-empty probability of the queue at the typical link equals to the value $\zeta$ assumed above. Therefore, we can get a fixed-point equation as follows.
\begin{equation}
\frac{\xi}{p}\exp\big(p\zeta\lambda C_0+\theta Wr_0^\alpha\big)=\zeta. \label{eqn:fixpoint}
\end{equation}
Letting $\omega=-p\zeta\lambda C_0$, the above fixed-point equation (\ref{eqn:fixpoint}) transforms into
\begin{equation}
\omega\exp(\omega)=-\xi\exp(\theta Wr_0^\alpha)\lambda C_0. \label{eqn:fixpoint2}
\end{equation}

Let $\mathcal{W}(z)$ be the Lambert $\mathcal{W}$ function which solves the equation $\mathcal{W}(z)e^{\mathcal{W}(z)}=z$. The Lambert $\mathcal{W}$ function has two branches, i.e., the principal branch $\mathcal{W}_0(z)$ and the branch $\mathcal{W}_{-1}(z)$.
The result obtained from the branch $\mathcal{W}_{-1}(z)$ is rejected since it leads to a system with a success probability which increases with arrival rate.
Using the principal branch of the Lambert $\mathcal{W}$ function $\mathcal{W}_0(z)$ to solve the above fixed-point equation (\ref{eqn:fixpoint2}), we get the solution $\omega_0$ as follows
\begin{equation}
\omega_0=\mathcal{W}_0(-\xi\lambda C_0\exp(\theta Wr_0^\alpha)).
\end{equation}
Therefore, we get the following lemma
\begin{lem}
\label{lem:zeta0}
In a network with high mobility, the non-empty probability for all queues at the stationary regime is the same, which is
\begin{equation}
\zeta_0=-\frac{1}{p\lambda C_0}\mathcal{W}_0(-\xi\lambda C_0\exp(\theta Wr_0^\alpha)). \label{eqn:zeta0}
\end{equation}
\end{lem}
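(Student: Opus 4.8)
The plan is to verify that the fixed-point equation (\ref{eqn:fixpoint}) for $\zeta$ is equivalent to (\ref{eqn:fixpoint2}), solve the latter using the Lambert $\mathcal{W}$ function, and then argue that the principal branch $\mathcal{W}_0$ is the physically meaningful solution. First I would recall the derivation leading up to the statement: by the high-mobility decoupling assumption, each queue at the stationary regime is an independent Geo/Geo/1 queue with service rate $\mu = p\mathcal{P}_k$ where $\mathcal{P}_k = \exp(-p\zeta\lambda C_0 - \theta W r_0^\alpha)$, so the utilization (non-empty probability) equals $\xi/\mu$. Consistency — i.e., the assumed non-empty probability $\zeta$ must equal the computed one — gives exactly (\ref{eqn:fixpoint}).

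Next I would perform the substitution $\omega = -p\zeta\lambda C_0$, so $\zeta = -\omega/(p\lambda C_0)$. Substituting into (\ref{eqn:fixpoint}) and multiplying both sides by $-p\lambda C_0/p$ yields $\omega\exp(\omega) = -\xi\exp(\theta W r_0^\alpha)\lambda C_0$, which is (\ref{eqn:fixpoint2}). By definition of the Lambert $\mathcal{W}$ function, $\omega = \mathcal{W}(-\xi\lambda C_0\exp(\theta W r_0^\alpha))$; unwinding the substitution gives $\zeta = -\frac{1}{p\lambda C_0}\mathcal{W}(-\xi\lambda C_0\exp(\theta W r_0^\alpha))$, which is (\ref{eqn:zeta0}) once the branch is fixed.

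The one genuinely nontrivial point — and the main obstacle — is justifying the selection of the principal branch $\mathcal{W}_0$ over $\mathcal{W}_{-1}$. Since the argument $z = -\xi\lambda C_0\exp(\theta W r_0^\alpha)$ is negative, it lies in $[-1/e, 0)$ precisely when the stability condition holds, and there the equation $\omega e^\omega = z$ has two real roots: $\mathcal{W}_0(z)\in[-1,0)$ and $\mathcal{W}_{-1}(z)\in(-\infty,-1]$. I would argue that the branch $\mathcal{W}_{-1}$ must be discarded on physical grounds: as already noted in the text, it yields $\zeta$ (and hence $\mathcal{P}_k$) that behaves monotonically the wrong way in $\xi$ — the success probability would increase with the arrival rate, which is absurd. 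A cleaner way to make this rigorous is to observe that $\zeta$ should approach $0$ (an empty system) as $\xi\to 0^+$; since $\mathcal{W}_0(0)=0$ while $\mathcal{W}_{-1}(z)\to-\infty$ as $z\to 0^-$, only the principal branch gives $\zeta\to 0$, and moreover only $\mathcal{W}_0$ keeps $\zeta\in(0,1]$ in the admissible parameter range. I would also remark that existence of a valid fixed point requires $z\ge -1/e$, i.e., $\xi\lambda C_0\exp(\theta W r_0^\alpha)\le 1/e$, which is implied in the MSA regime; otherwise no stationary $\zeta\le 1$ exists and the queues are unstable, consistent with (\ref{eqn:stablecondition}).
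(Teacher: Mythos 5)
Your proposal is correct and follows essentially the same route as the paper: derive the fixed-point equation from the Geo/Geo/1 utilization, substitute $\omega=-p\zeta\lambda C_0$ to reach the form $\omega e^{\omega}=-\xi\lambda C_0\exp(\theta Wr_0^\alpha)$, and invert with the principal branch of the Lambert $\mathcal{W}$ function while rejecting $\mathcal{W}_{-1}$. Your added justification for the branch choice via the limit $\zeta\to 0$ as $\xi\to 0^+$ and the existence condition $z\geq -1/e$ is a slightly more rigorous version of the paper's monotonicity argument, but the overall approach is identical.
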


Plugging (\ref{eqn:zeta0}) into (\ref{eqn:stablecondition}), we get the following theorem.
\begin{thm}
\label{thm:stablecondition}
In a network with high mobility, the stable condition for each queue is
\begin{equation}
\xi<p\exp\big(\mathcal{W}_0(-\xi\lambda C_0\exp(\theta Wr_0^\alpha))\\-\theta Wr_0^\alpha\big). \label{eqn:stablecondition2}
\end{equation}
\end{thm}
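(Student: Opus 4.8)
The plan is to deduce Theorem~\ref{thm:stablecondition} directly from Lemma~\ref{lem:zeta0} together with the elementary stability criterion for the Geo/Geo/1 queue at the typical link. Recall that in the high mobility case the service process is geometric with rate $\mu=p\mathcal{P}_k$, and a Geo/Geo/1 queue is stable exactly when its arrival rate is strictly below its service rate, i.e. $\xi<\mu$; this is consistent with Lemma~\ref{lem:meandelay}, where the mean delay $D=(1-\xi)/(\mu-\xi)$ is finite precisely on $\{\xi<\mu\}$. Evaluating $\mathcal{P}_k$ at the self-consistent non-empty probability $\zeta=\zeta_0$ turns $\xi<\mu$ into the inequality (\ref{eqn:stablecondition}) with $\zeta$ replaced by $\zeta_0$, namely $\xi<p\exp(-p\zeta_0\lambda C_0-\theta Wr_0^\alpha)$.

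Next I would substitute the closed form (\ref{eqn:zeta0}) for $\zeta_0$. Since $\zeta_0=-\tfrac{1}{p\lambda C_0}\mathcal{W}_0\!\big(-\xi\lambda C_0\exp(\theta Wr_0^\alpha)\big)$, we have the identity $-p\zeta_0\lambda C_0=\mathcal{W}_0\!\big(-\xi\lambda C_0\exp(\theta Wr_0^\alpha)\big)$, whence $-p\zeta_0\lambda C_0-\theta Wr_0^\alpha=\mathcal{W}_0\!\big(-\xi\lambda C_0\exp(\theta Wr_0^\alpha)\big)-\theta Wr_0^\alpha$. Plugging this into $\xi<p\exp(-p\zeta_0\lambda C_0-\theta Wr_0^\alpha)$ yields exactly (\ref{eqn:stablecondition2}), which is the assertion of the theorem; no further computation is needed beyond this substitution.

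The step that deserves the most care — and the one I would treat as the main obstacle — is justifying that $\zeta_0$ given by (\ref{eqn:zeta0}) is the relevant non-empty probability, i.e. that it lies in $(0,1]$ and that the principal branch $\mathcal{W}_0$ (rather than $\mathcal{W}_{-1}$) is the correct root of the fixed-point equation (\ref{eqn:fixpoint2}). I would argue as follows: writing (\ref{eqn:fixpoint}) as $\tfrac{\xi}{p}\exp(\theta Wr_0^\alpha)\,e^{p\zeta\lambda C_0}=\zeta$, the left-hand side is a convex, strictly increasing function of $\zeta$ that is strictly positive at $\zeta=0$, so it meets the diagonal $\zeta\mapsto\zeta$ in at most two points; the smaller intersection is the stable, physically meaningful fixed point and corresponds to $\mathcal{W}_0$, while the larger one corresponds to $\mathcal{W}_{-1}$ and is discarded because it yields a success probability that increases with $\xi$, as already noted in the text preceding Lemma~\ref{lem:zeta0}. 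Existence of a real solution requires the Lambert argument to satisfy $-\xi\lambda C_0\exp(\theta Wr_0^\alpha)\geq -e^{-1}$, and I would observe that this is automatically the regime of interest, since outside it no stable fixed point exists and (\ref{eqn:stablecondition2}) fails. Once $\zeta_0\in(0,1]$ is established, the chain of equivalences above delivers the theorem.
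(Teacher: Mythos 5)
Your proposal is correct and follows essentially the same route as the paper, which simply plugs the closed form of $\zeta_0$ from Lemma~\ref{lem:zeta0} into the stability condition (\ref{eqn:stablecondition}) via the identity $-p\zeta_0\lambda C_0=\mathcal{W}_0(-\xi\lambda C_0\exp(\theta Wr_0^\alpha))$. Your additional discussion of why the principal branch $\mathcal{W}_0$ is the physically meaningful fixed point and of the existence condition on the Lambert argument is a welcome refinement that the paper only addresses informally, but it does not change the argument.
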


Plugging (\ref{eqn:zeta0}) into (\ref{eqn:succHighMobility}), we get the success probability in the following theorem.
\begin{thm}
\label{thm:succHighMobility2}
In a network with high mobility, the success probability for all active transmissions at the stationary regime is the same, which is
\begin{equation}
\mathcal{P}_0=\exp\big(\mathcal{W}_0(-\xi\lambda C_0\exp(\theta Wr_0^\alpha))\\-\theta Wr_0^\alpha\big).
\label{eqn:succHighMobility2}
\end{equation}
\end{thm}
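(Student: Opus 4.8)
The plan is to obtain Theorem~\ref{thm:succHighMobility2} as an immediate consequence of Lemma~\ref{lem:zeta0} together with the per-slot success probability expression (\ref{eqn:succHighMobility}). First I would recall why (\ref{eqn:succHighMobility}) holds: in the high mobility regime the node positions are redrawn as an independent PPP of intensity $\lambda$ in every slot, so the queues decouple and at the stationary regime each queue is non-empty with the same probability $\zeta$; hence the set of transmitters that are simultaneously backlogged and scheduled is an independent thinning of the PPP with retention probability $p\zeta$, i.e. a PPP of intensity $p\zeta\lambda$. Feeding this intensity into the standard interference-plus-noise Laplace functional computation for the Poisson bipolar model (the same computation that produced (\ref{eqn:succPoisson}) and (\ref{eqn:succBacklogged})) gives $\mathcal{P}_k=\exp(-p\zeta\lambda C_0-\theta Wr_0^\alpha)$ for every $k$, which already shows that the success probability is slot-invariant and link-invariant once the stationary non-empty probability is identified.

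Next I would substitute the fixed-point value $\zeta=\zeta_0$ from Lemma~\ref{lem:zeta0} into this expression. Writing $\zeta_0=-\frac{1}{p\lambda C_0}\mathcal{W}_0(-\xi\lambda C_0\exp(\theta Wr_0^\alpha))$, the product $-p\zeta_0\lambda C_0$ collapses, after cancellation of the $p\lambda C_0$ factor, to exactly $\mathcal{W}_0(-\xi\lambda C_0\exp(\theta Wr_0^\alpha))$, so that
\begin{equation}
\mathcal{P}_0=\exp\big(\mathcal{W}_0(-\xi\lambda C_0\exp(\theta Wr_0^\alpha))-\theta Wr_0^\alpha\big),
\end{equation}
which is the claimed closed form. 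Strictly speaking this is the only substantive line of the argument.

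There is essentially no hard step here; the points that deserve a sentence of justification are (i) the choice of the principal branch $\mathcal{W}_0$ rather than $\mathcal{W}_{-1}$, which the text has already argued on physical grounds (the $\mathcal{W}_{-1}$ root would make the success probability increase with the arrival rate), and (ii) well-definedness, namely that the argument $-\xi\lambda C_0\exp(\theta Wr_0^\alpha)$ lies in $[-1/e,0)$ so that a real solution exists, which is exactly why this theorem and Theorem~\ref{thm:stablecondition} share the same Lambert-$\mathcal{W}$ expression and is equivalent to the stability region. If I wanted the argument to be fully self-contained I would also note that, on the stable branch, $\omega_0\in[-1,0)$ by monotonicity of $\omega\mapsto\omega e^{\omega}$, so that $\zeta_0=-\omega_0/(p\lambda C_0)\in(0,1)$ is a legitimate probability and the fixed point used in Lemma~\ref{lem:zeta0} is the physically meaningful one.
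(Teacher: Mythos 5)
Your proposal is correct and follows exactly the paper's route: the paper likewise obtains the theorem by plugging the fixed-point value $\zeta_0$ from Lemma~\ref{lem:zeta0} into the stationary success probability expression (\ref{eqn:succHighMobility}), with the $p\lambda C_0$ factor cancelling to leave the Lambert-$\mathcal{W}_0$ term in the exponent. Your added remarks on the branch choice and the well-definedness of the $\mathcal{W}_0$ argument are sensible extras but not part of the paper's (one-line) derivation.
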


The service rate of a Geo/Geo/1 queueing system at the typical link equals to the success probability. By Lemma \ref{lem:meandelay} and Theorem \ref{thm:succHighMobility2}, we get the mean delay in the following corollary.
\begin{cor}
\label{thm:meandelayhigh}
In a network with high mobility, the mean delay for all queues at the stationary regime is the same, which is
\begin{equation}
D_0=\frac{1-\xi}{p\exp\big(\mathcal{W}_0(-\xi\lambda C_0\exp(\theta Wr_0^\alpha))\\-\theta Wr_0^\alpha\big)-\xi}. \label{eqn:meandelayhigh}
\end{equation}
\end{cor}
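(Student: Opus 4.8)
The plan is to obtain $D_0$ by a direct substitution that chains together Lemma~\ref{lem:meandelay} and Theorem~\ref{thm:succHighMobility2}. First I would recall that in the high mobility case the arrival process at the typical link is Bernoulli with rate $\xi$, hence the interarrival times are geometric; and since, at the stationary regime, every active transmission succeeds with the \emph{same} probability $\mathcal{P}_0$ and is attempted with probability $p$ in each slot, the service time (in number of slots) is geometric with mean $1/(p\mathcal{P}_0)$. Thus the queue at the typical link is a Geo/Geo/1 system with arrival rate $\xi$ and service rate $\mu = p\mathcal{P}_0$, and Lemma~\ref{lem:meandelay} gives $D_0 = \frac{1-\xi}{\mu-\xi}$ provided $\xi<\mu$.

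Next I would invoke Theorem~\ref{thm:succHighMobility2} to replace $\mathcal{P}_0$ by $\exp\big(\mathcal{W}_0(-\xi\lambda C_0\exp(\theta Wr_0^\alpha))-\theta Wr_0^\alpha\big)$, so that $\mu = p\exp\big(\mathcal{W}_0(-\xi\lambda C_0\exp(\theta Wr_0^\alpha))-\theta Wr_0^\alpha\big)$. Substituting this expression for $\mu$ into $D_0 = \frac{1-\xi}{\mu-\xi}$ reproduces exactly the claimed formula~(\ref{eqn:meandelayhigh}). No further computation is required beyond this algebraic step.

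The only point that needs care — and it is not really an obstacle, having already been settled — is the hypothesis $\xi<\mu$ needed to apply Lemma~\ref{lem:meandelay}. This is precisely the stability condition of Theorem~\ref{thm:stablecondition}, which itself relies on the fixed-point equation~(\ref{eqn:fixpoint}) admitting the solution $\zeta_0$ of Lemma~\ref{lem:zeta0}, obtained by taking the principal branch $\mathcal{W}_0$ (the $\mathcal{W}_{-1}$ branch being discarded because it would yield a success probability increasing in the arrival rate). Granting the stability condition, the corollary follows immediately from the substitution described above.
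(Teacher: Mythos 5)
Your proposal is correct and follows essentially the same route as the paper: identify the queue at the typical link as Geo/Geo/1 with arrival rate $\xi$ and service rate $\mu=p\mathcal{P}_0$, apply Lemma~\ref{lem:meandelay}, and substitute the expression for $\mathcal{P}_0$ from Theorem~\ref{thm:succHighMobility2}. Your explicit check of the stability hypothesis $\xi<\mu$ via Theorem~\ref{thm:stablecondition} is a point the paper leaves implicit, but it does not change the argument.
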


According to the Little's Law, the average number of packets in a queue at the stationary regime $L_0$ equals to the product of the arrival rate and the mean delay, i.e., $L_0=\xi D_0$. Thus, we get the average queue length as
\begin{eqnarray}
L_0=\frac{\xi(1-\xi)}{p\exp\big(\mathcal{W}_0(-\xi\lambda C_0\exp(\theta Wr_0^\alpha))-\theta Wr_0^\alpha\big)-\xi}. \label{eqn:avelength}
\end{eqnarray}

\begin{figure}
\centering
\includegraphics[width=0.5\textwidth]{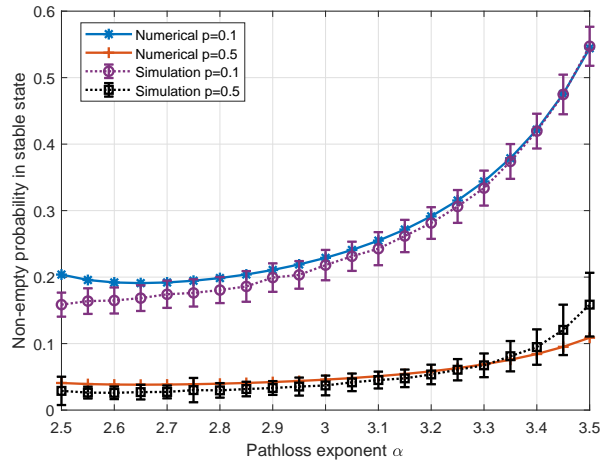}
\caption{Non-empty probability in the stable state as functions of the pathloss exponent $\alpha$. The error bars show the standard deviation.}
\label{fig:NonemptyCom}
\end{figure}

\begin{figure}
\centering
\includegraphics[width=0.5\textwidth]{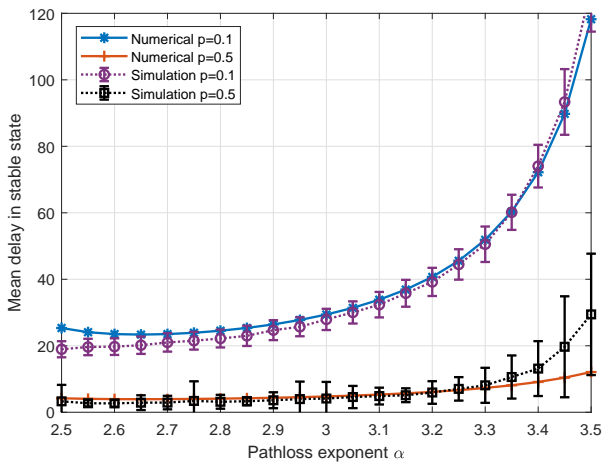}
\caption{Mean delay in stable state as functions of the pathloss exponent $\alpha$. The error bars show the standard deviation.}
\label{fig:MeandelayCom}
\end{figure}

In the following discussion, we analyze the success probability and the mean delay through the numerical evaluation. The distance between each transmitter and the associated receiver is $r_0=5$, and the SINR threshold is $\theta=10{\rm dB}$. The normalized noise is $W=10^{-3.3}$.

In Figure \ref{fig:NonemptyCom} and Figure \ref{fig:MeandelayCom}, we compare the simulation results for the scenario of MSA and the numerical results obtained by the high mobility equivalence to evaluate the accuracy of the proposed equivalent analysis approach. The density of transmitters is $\lambda=0.01$, and arrival rate is $\xi=0.01$. The simulation for the scenario of MSA is conducted in an area of size $240\times240$, with the statistics within the margins less than $20$ are ignored to eliminate the edge effect (i.e., the links at the edge experience less interference due to the finite edge). The network topology is regenerated for $200$ times according to the point process, and for each realization of the point process, the duration of the simulation is $1000$ time slots. Figure \ref{fig:NonemptyCom} and Figure \ref{fig:MeandelayCom} reveal that the proposed approximating approach by the equivalence of high mobility provides a good characterization for both non-empty probability and mean delay in the stable state for MSA.

\begin{figure}
\centering
\includegraphics[width=0.5\textwidth]{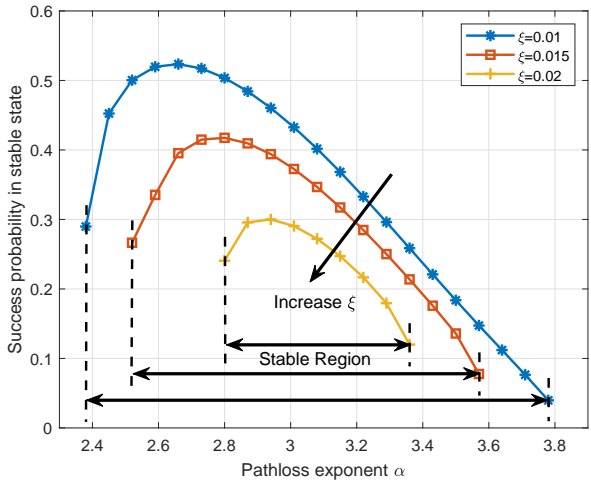}
\caption{Success probability in the stable state as functions of the pathloss exponent $\alpha$. The transmit probability is $p=0.5$.}
\label{fig:SuccProb}
\end{figure}

\begin{figure}
\centering
\includegraphics[width=0.5\textwidth]{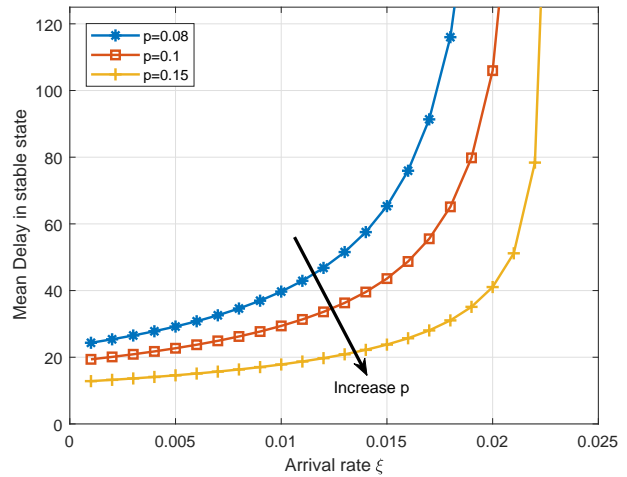}
\caption{Mean delay in stable state as functions of the arrival rate $\xi$. The pathloss exponent is $\alpha=3$.}
\label{fig:Meandelay}
\end{figure}

In Figure \ref{fig:SuccProb}, we plot the success probability in the stable state as functions of the pathloss exponent $\alpha$ for different arrival rate $\xi$ with the high mobility equivalence. We observe that the success probability is limited by large interference for small $\alpha$ and by small desired signal power for large $\alpha$. We also observe that the range of $\alpha$ which makes the network stable become smaller when $\xi$ increases.
In Figure \ref{fig:Meandelay}, we plot the mean delay in the stable state as functions of the arrival rate $\xi$ for different transmit probability $p$ with the high mobility equivalence. Figure \ref{fig:Meandelay} reveals that the mean delay increases rapidly when the arrival rate $\xi$ approaches certain critical value, beyond which the network will be unstable. For the practical system design, Figure \ref{fig:SuccProb} helps to understand to what extent the arrival rate could be increased to guarantee certain success probability target, and Figure \ref{fig:Meandelay} can be used to configure the protocol parameter $p$ to achieve the mean delay target. These plots give examples for how the proposed approaches are used to reveal the relationship between performance metrics and deployment parameters.

Therefore, as long as the conditions for `massive and sporadic' are met, the equivalence of high mobility will provide an accurate approximation for the original static network.
The proposed approaches of high mobility approximation is convenient to be used in the practical system design of massive and sporadic access since the analysis of a wireless network with high mobility is far simpler to handle than a static network within which complicated coupling among the queues persists.
Moreover, the derivations based on high mobility approximation will most of the time lead to closed-form solutions based on which the insights can be directly observed from the closed-form formulas.

\section{Conclusions}
\label{sec:conclusions}
Evaluation of longer-term performance metrics of a wireless network, such as delay and stability, by using stochastic geometry has long been a difficult problem due to the interacting queues problem.
In this paper, we define the scenario of MSA and derive the temporal correlation coefficients for interference and successful transmission events at different time slots to demonstrate that these correlations are negligible in the scenario of MSA. Hence, we propose to use an equivalent high mobility network to evaluate the performance for the scenario of MSA, such as non-empty probability, success probability, average queue length, and mean delay.
Moreover, we compare the non-empty probability and the mean delay for the original static network and the equivalent network of high mobility to demonstrate the accuracy of the proposed approach.

The proposed approach is promising for providing a convenient and universal solution for the design of IoT networks with massive and sporadic access, which is far simpler to handle than a static network where complicated coupling among queues persists. Future works are still needed to explore the accuracy of the proposed approach for the cases where the spatial distribution of nodes exhibits aggregation or repulsion. Meanwhile, the adjustable finite mobility could also be evaluated to provide a more precise approximation.

\bibliographystyle{IEEEtran}
\bibliography{ref}

\begin{IEEEbiography}[{\includegraphics[width=1in,height=1.25in,clip,keepaspectratio]{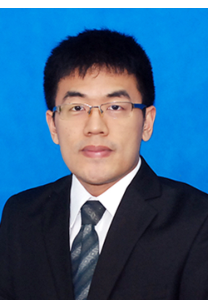}}]{Yi Zhong} (S'12-M'15) received his B.S. and Ph.D. degree in Electronic Engineering from University of Science and Technology of China (USTC) in 2010 and 2015 respectively. From 2015 to 2016, he was a Postdoctoral Research Fellow with the Singapore University of Technology and Design (SUTD) in the Wireless Networks and Decision Systems (WNDS) Group. Now, he is an assistant professor with School of Electronic Information and Communications, Huazhong University of Science and Technology, Wuhan, China. He is an editor of the IEEE Wireless Communications Letters (since 2020), EURASIP on Wireless Communication and Networking (since 2019), Elsevier Physical Communication (since 2019). His main research interests include heterogeneous and femtocell-overlaid cellular networks, wireless ad hoc networks, stochastic geometry and point process theory.
\end{IEEEbiography}

\begin{IEEEbiography}[{\includegraphics[width=1in,height=1.25in,clip,keepaspectratio]{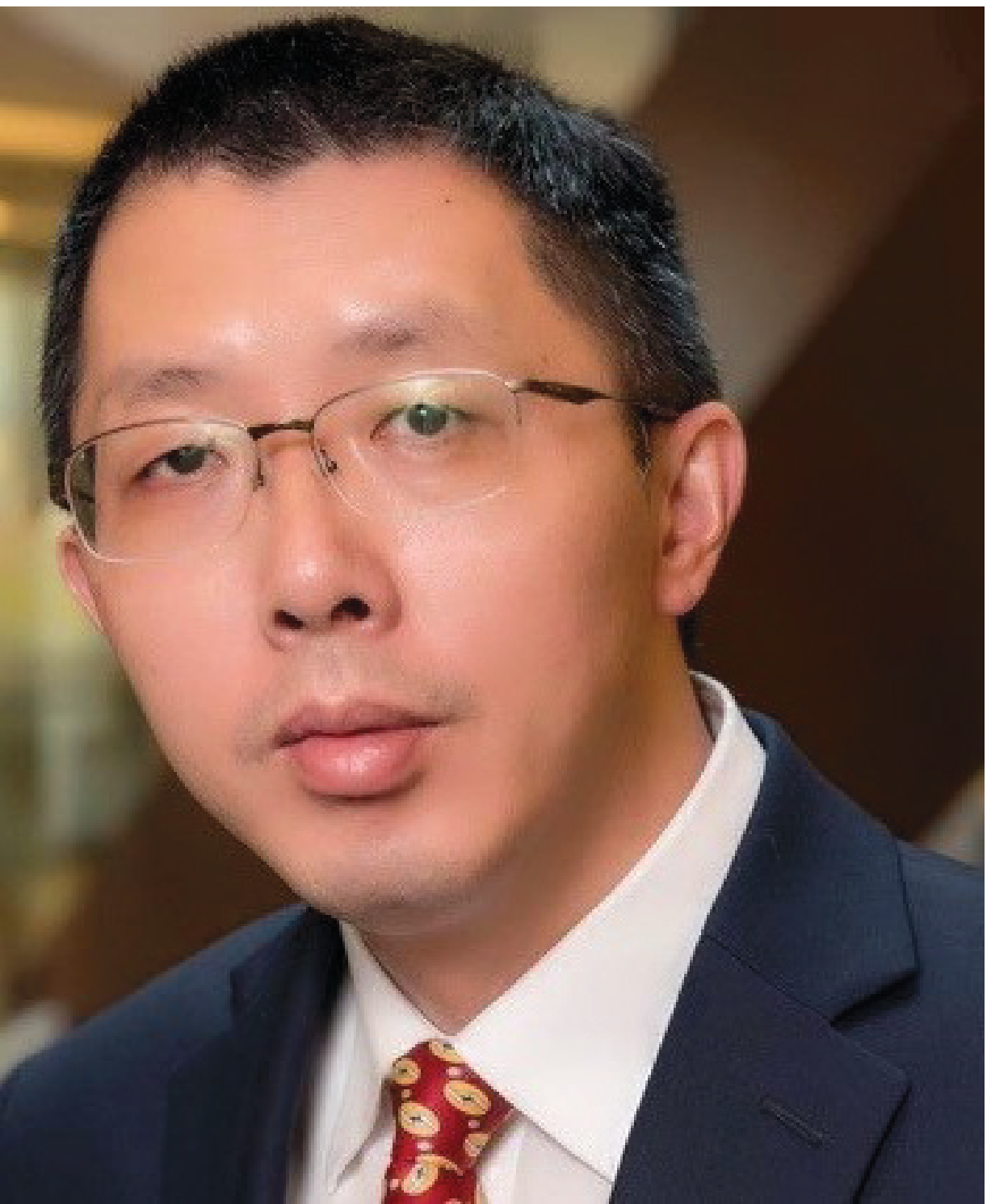}}]{Guoqiang Mao} (S'98-M'02-SM'08-F'18) is a Distinguished Professor at Xidian University. Before that he was with University of Technology Sydney and the University of Sydney. He has published over 200 papers in international conferences and journals, which have been cited more than 9000 times. He is an editor of the IEEE Transactions on Intelligent Transportation Systems (since 2018), IEEE Transactions on Wireless Communications (2014-2019), IEEE Transactions on Vehicular Technology (since 2010) and received ¡°Top Editor¡± award for outstanding contributions to the IEEE Transactions on Vehicular Technology in 2011, 2014 and 2015. He was a co-chair of IEEE Intelligent Transport Systems Society Technical Committee on Communication Networks. He has served as a chair, co-chair and TPC member in a number of international conferences. He is a Fellow of IET. His research interest includes intelligent transport systems, applied graph theory and its applications in telecommunications, Internet of Things, wireless sensor networks, wireless localization techniques and network modeling and performance analysis.
\end{IEEEbiography}

\begin{IEEEbiography}[{\includegraphics[width=1in,height=1.25in,clip,keepaspectratio]{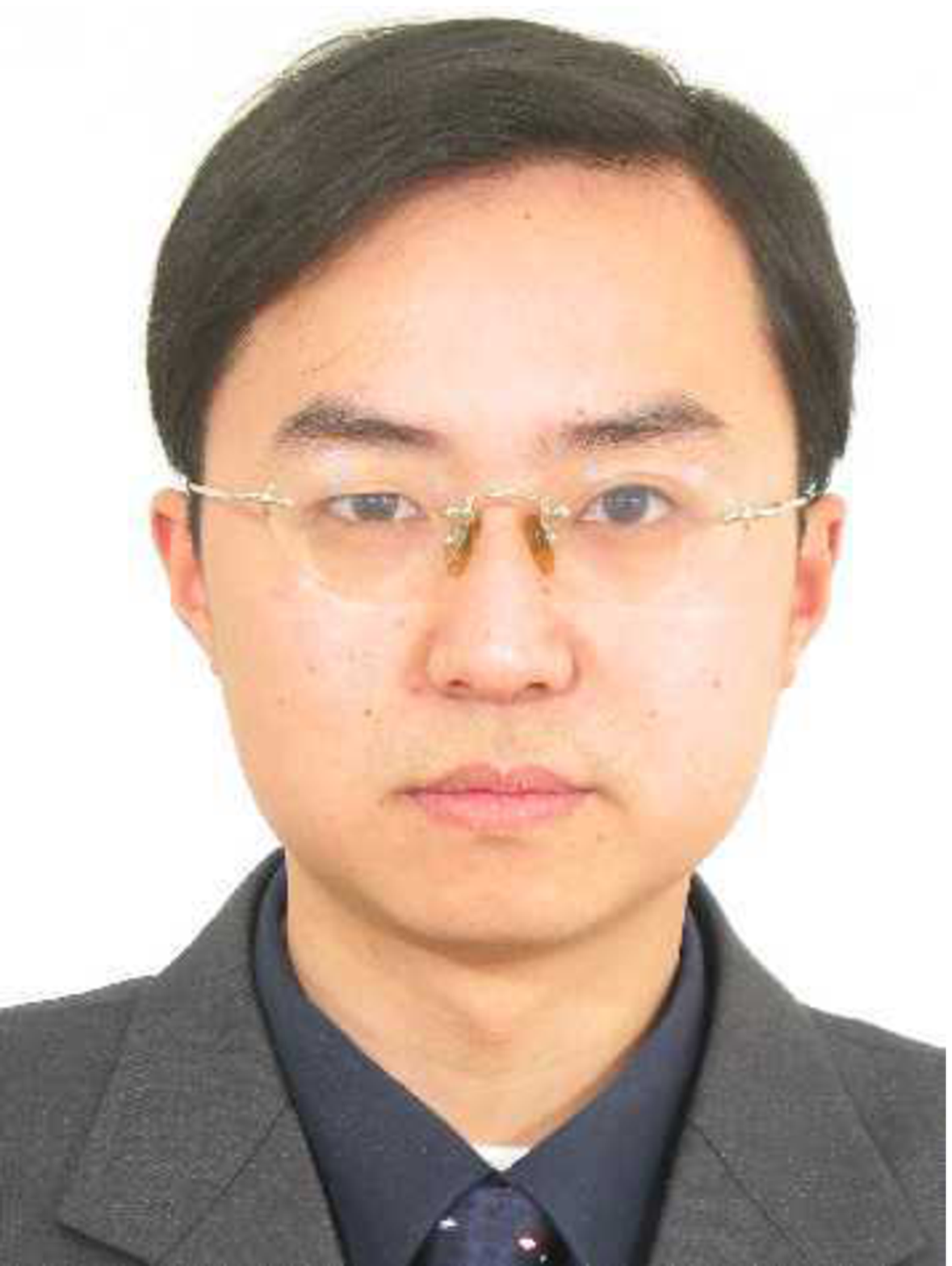}}]{Xiaohu Ge} (M'09-SM'11) received the Ph.D. degree in communication and information engineering from the Huazhong University of Science and Technology (HUST), Wuhan, China, in 2003. He is currently a Full Professor with the School of Electronic Information and Communications, HUST. He is an Adjunct Professor with the Faculty of Engineering and Information Technology, University of Technology Sydney, Ultimo, NSW, Australia. He worked as a Researcher with Ajou University, Suwon, South Korea, and the Politecnico Di Torino, Turin, Italy, from January 2004 to October 2005. He has worked with HUST since November 2005. He has published more than 200 papers in refereed journals and conference proceedings and has been granted about 25 patents in China. His research interests are in the area of mobile communications, traffic modeling in wireless networks, green communications, and interference modeling in wireless communications. Prof. Ge received the Best Paper Awards from IEEE Globecom 2010. He served as the General Chair for the 2015 IEEE International Conference on Green Computing and Communications (IEEE GreenCom 2015). He serves as an Associate Editor for IEEE Wireless Communications, the IEEE Transactions on Vehicular Technology, and IEEE Access.
\end{IEEEbiography}

\begin{IEEEbiography}[{\includegraphics[width=1in,height=1.25in,keepaspectratio]{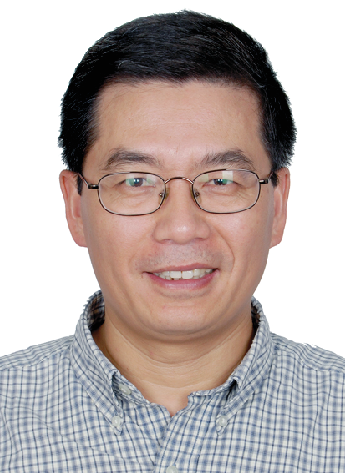}}]
{Fu-Chun Zheng} (M'95-SM'99) obtained the BEng (1985) and MEng (1988) degrees in radio engineering from Harbin Institute of Technology, China, and the PhD degree in Electrical Engineering from the University of Edinburgh, UK, in 1992.

From 1992 to 1995, he was a post-doctoral research associate with the University of Bradford, UK, Between May 1995 and August 2007, he was with Victoria University, Melbourne, Australia, first as a lecturer and then as an associate professor in mobile communications.  He was with the University of Reading, UK, from September 2007 to July 2016 as a Professor (Chair) of Signal Processing. He has also been a distinguished adjunct professor with Southeast University, China, since 2010. Since August 2016, he has been with Harbin Institute of Technology (Shenzhen), China, as a distinguished professor, and the University of York, UK. He has been awarded two UK EPSRC Visiting Fellowships - both hosted by the University of York (UK): first in August 2002 and then again in August 2006. Over the past two decades, Dr Zheng has also carried out many government and industry sponsored research projects - in Australia, the UK, and China. He has been both a short term visiting fellow and a long term visiting research fellow with British Telecom, UK. Dr Zheng¡¯s current research interests include signal processing for communications, multiple antenna systems, green communications, and ultra-dense networks.

He has been an active IEEE member since 1995. He was an editor (2001-2004) of IEEE Transactions on Wireless Communications. In 2006, Dr Zheng served as the general chair of IEEE VTC 2006-S, Melbourne, Australia (www.ieeevtc.org/vtc2006spring) - the first ever VTC held in the southern hemisphere in VTC¡¯s history of six decades. More recently he was the executive TPC Chair for VTC 2016-S, Nanjing, China (the first ever VTC held in mainland China: www.ieeevtc.org/vtc2016spring).
\end{IEEEbiography}

\end{document}